\newcommand{\derive}{\mathit{val}}
\newlength\savedwidth
\newcommand{\LF}{\mathit{LF}}
\newcommand{\kouho}{\mathit{LFCand}}
\newcommand{\suffix}{\mathit{Suffix}}
\newcommand{\FirstMismatch}{\mathit{FM}}
\newcommand{\lcp}{\mathit{lcp}}
\title{
  Efficient Lyndon factorization of grammar compressed text
}
 \author{
   Tomohiro I\inst{1,2}
   \and
   Yuto Nakashima\inst{1}
   \and
   Shunsuke Inenaga\inst{1}
   \and 
   Hideo Bannai\inst{1}
   \and \\
   Masayuki Takeda\inst{1}
 }
 \institute{
   Department of Informatics, Kyushu University, Japan \\
   \email{\{tomohiro.i, yuto.nakashima, inenaga, bannai, takeda\}@inf.kyushu-u.ac.jp}
   \and
   Japan Society for the Promotion of Science (JSPS)
 }
\begin{document}
\maketitle

\begin{abstract}
  We present an algorithm for computing the Lyndon factorization of a
  string that is given in grammar compressed form, namely, a Straight
  Line Program (SLP).
  The algorithm runs in $O(n^4 + mn^3h)$ time and $O(n^2)$ space, where
  $m$ is the size of the Lyndon factorization, $n$ is the size of
  the SLP, and $h$ is the height of the derivation tree of the SLP.
  Since the length of the decompressed string can be exponentially large 
  w.r.t. $n, m$ and $h$,
  our result is the first polynomial time solution 
  when the string is given as SLP.
\end{abstract}
\section{Introduction}

\sinote{added}{%
\emph{Compressed string processing} (\emph{CSP}) is a task of 
processing compressed string data without explicit decompression.
As any method that first decompresses the data requires time and space
dependent on the decompressed size of the data,
CSP without explicit decompression has been gaining importance due
to the ever increasing amount of data produced and stored.
A number of efficient CSP algorithms have been proposed,
e.g., see~\cite{hermelin09:_unified_algor_accel_edit_distan,yamamoto11:_faster_subseq_dont_care_patter,goto13:_fast_slp,gawrychowski11:_LZ_comp_str_fast_,gawrychowski11:_optimal_LZW_,gawrychowski12:_faster_algor_comput_edit_distan}.
In this paper, we present new CSP algorithms that compute 
the \emph{Lyndon factorization} of strings.
}%

A string $\ell$ is said to be a \emph{Lyndon word}
if $\ell$ is lexicographically smallest among
its circular permutations of characters of $\ell$.
For example, $\mathtt{aab}$ is a Lyndon word,
but its circular permutations $\mathtt{aba}$ and $\mathtt{baa}$ are not.
Lyndon words have various and important applications in, 
e.g., 
musicology~\cite{Chemillier04:_periodic_musical_Lyndon},
bioinformatics~\cite{Delgrange04:_STAR_tandem_repeat},
approximation algorithm~\cite{Mucha13:_lyndon_superstring},
string matching~\cite{crochemore91:_two-way_matching,breslauer11:_simple_real-time,neuburger11:_succinct_2D},
word combinatorics~\cite{fredricksen78:_necklaces_de_Bruijn,Provencal11},
and free Lie algebras~\cite{lyndon54:_burnside}.

The \emph{Lyndon factorization} (a.k.a. \emph{standard factorization}) of a string $w$,
denoted $\LF(w)$,
is a unique sequence of Lyndon words such that 
the concatenation of the Lyndon words gives $w$ and
the Lyndon words in the sequence are lexicographically 
non-increasing~\cite{ChenFL58:_lyndon_factorization_}.
Lyndon factorizations are used in
a bijective variant of Burrows-Wheeler 
transform~\cite{kufleitner09:_bijective_BWT,Gil12:_bijective_sorting_}
and a digital geometry algorithm~\cite{BrlekLPR09}.
Duval~\cite{Duval83:_facrorizing_words_} proposed
an elegant on-line algorithm to compute 
$\LF(w)$ of a given string $w$ of length $N$ in $O(N)$ time.
Efficient parallel algorithms to compute 
the Lyndon factorization are also known~\cite{ApostolicoC95,DaykinIS94}.

We present a new CSP
algorithm which computes the Lyndon factorization $\LF(w)$ of a string $w$, 
when $w$ is given in a \emph{grammar-compressed form}.
Let $m$ be the number of factors in $\LF(w)$.
Our first algorithm computes $\LF(w)$ in $O(n^4 + mn^3h)$ time and $O(n^2)$ space,
where $n$ is the size of a given \emph{straight-line program} (\emph{SLP}),
which is a context-free grammar 
in Chomsky normal form that derives only $w$, and $h$ is the height of the derivation tree of the SLP.
Since the decompressed string length $|w| = N$ can be exponentially 
large w.r.t. $n, m$ and $h$,
our $O(n^4 + mn^3h)$ solution can be efficient for highly compressive strings.
\if0
Our second algorithm is specialized to work on a more specific case,
i.e., when the string is given in the Lempel-Ziv 78 encoding~\cite{LZ78}.
The algorithm computes $\LF(w)$ in $O(s \log s)$ time given the
LZ78 encoding of size $s$ representing $w$.
As a byproduct we show that $m = O(s)$, which may be of independent interest.
\fi

\section{Preliminaries}

\subsection{Strings and model of computation}

Let $\Sigma$ be a finite {\em alphabet}.
An element of $\Sigma^*$ is called a {\em string}.
The length of a string $w$ is denoted by $|w|$. 
The empty string $\varepsilon$ is a string of length 0,
namely, $|\varepsilon| = 0$.
Let $\Sigma^+$ be the set of non-empty strings,
i.e., $\Sigma^+ = \Sigma^* - \{\varepsilon\}$.
For a string $w = xyz$, $x$, $y$ and $z$ are called
a \emph{prefix}, \emph{substring}, and \emph{suffix} of $w$, respectively.
A prefix $x$ of $w$ is called a \emph{proper prefix} of $w$
if $x \neq w$, i.e., $x$ is shorter than $w$.
The set of suffixes of $w$ is denoted by $\suffix(w)$.
The $i$-th character of a string $w$ is denoted by $w[i]$, where $1 \leq i \leq |w|$.
For a string $w$ and two integers $1 \leq i \leq j \leq |w|$, 
let $w[i..j]$ denote the substring of $w$ that begins at position $i$ and ends at
position $j$. For convenience, let $w[i..j] = \varepsilon$ when $i > j$.
For any string $w$ 
let $w^1 = w$, and for any integer $k > 2$ let $w^k = ww^{k-1}$,
i.e., $w^k$ is a $k$-time repetition of $w$.

A positive integer $p$ is said to be a \emph{period} of 
a string $w$ if $w[i] = w[i+p]$ for all $1 \leq i \leq |w|-p$.
Let $w$ be any string and $q$ be its smallest period.
If $p$ is a period of a string $w$
\hbnote*{added}{%
  such that $p < |w|$,
}%
then the positive integer $|w|-p$ is said to be a \emph{border} of $w$.
If $w$ has no borders, 
then $w$ is said to be \emph{border-free}.

If character $a \in \Sigma$ is lexicographically smaller than 
another character $b \in \Sigma$,
then we write $a \prec b$.
For any non-empty strings $x, y \in \Sigma^+$,
let $\lcp(x, y)$ be the length of the longest common prefix of $x$ and $y$.
We denote $x \prec y$, if either of the following conditions holds:
$x[\lcp(x, y)+1] \prec y[\lcp(x, y) + 1]$, or $x$ is a proper prefix of $y$.
For a set $S \subseteq \Sigma^+$ of non-empty strings,
let $\min_\prec S$ denote the lexicographically smallest string in $S$.

Our model of computation is the word RAM:
We shall assume that the computer word size is at least $\lceil \log_2 |w| \rceil$, 
and hence, standard operations on
values representing lengths and positions of string $w$
can be manipulated in constant time.
Space complexities will be determined by the number of computer words (not bits).

\subsection{Lyndon words and Lyndon factorization of strings}

Two strings $x$ and $y$ are said to be \emph{conjugate},
if there exist strings $u$ and $v$ such that $x = uv$ and $y = vu$.
A string $w$ is said to be a \emph{Lyndon word}, 
if $w$ is lexicographically strictly smaller than all of its conjugates of $w$.
Namely, $w$ is a Lyndon word, 
if for any factorization $w = uv$, it holds that $uv \prec vu$.
It is known that any Lyndon word is border-free.

\begin{definition}[\cite{ChenFL58:_lyndon_factorization_}]
The \emph{Lyndon factorization} of a string $w$,
denoted $\LF(w)$,
is the factorization $\ell_1^{p_1} \cdots \ell_m^{p_m}$ of $w$,
such that each $\ell_i \in \Sigma^+$ is a Lyndon word,
$p_{i} \geq 1$, and $\ell_{i} \succ \ell_{i+1}$ for all $1 \leq i < m$.
\end{definition}

It is known that the Lyndon factorization is unique for each string $w$,
and it was shown by Duval~\cite{Duval83:_facrorizing_words_}
that the Lyndon factorization can be computed in $O(N)$ time, where $N = |w|$.

$\LF(w)$ can be represented by the sequence $(|\ell_1|, p_1), \ldots, (|\ell_m|, p_m)$ of 
integer pairs,
where each pair $(|\ell_i|, p_i)$ represents the $i$-th Lyndon factor 
$\ell_i^{p_i}$ of $w$.
Note that this representation requires $O(m)$ space.

\subsection{Straight line programs}
\label{sec:slp}

A {\em straight line program} ({\em SLP}) is a set of productions 
$\mathcal S = \{ X_1 \rightarrow expr_1, X_2 \rightarrow expr_2, \ldots, X_n \rightarrow expr_n\}$,
where each $X_i$ is a variable and each $expr_i$ is an expression, where
$expr_i = a$ ($a\in\Sigma$), or $expr_i = X_{\ell(i)} X_{r(i)}$~($i > \ell(i),r(i)$).
It is essentially a context free grammar in Chomsky normal form, that derives a single string.
Let $\derive(X_i)$ represent the string derived from variable $X_i$.
To ease notation, we sometimes associate $\derive(X_i)$ with $X_i$ and
denote $|\derive(X_i)|$ as $|X_i|$,
and $\derive(X_i)[u..v]$ as $X_i[u..v]$ for $1 \leq u \leq v \leq |X_i|$.
An SLP $\mathcal{S}$ {\em represents} the string $w = \derive(X_n)$.
The \emph{size} of the program $\mathcal{S}$ is the number $n$ of
productions in $\mathcal{S}$.
Let $N$ be the length of the string represented by SLP $\mathcal{S}$,
i.e., $N = |w|$. 
Then $N$ can be as large as $2^{n-1}$. 

The derivation tree of SLP $\mathcal{S}$ is a labeled
ordered binary tree where each internal node is labeled with a
non-terminal variable in $\{X_1,\ldots,X_n\}$, and each leaf is labeled with a terminal character in $\Sigma$.
The root node has label $X_n$.
An example of the derivation tree of an SLP is shown in Fig.~\ref{fig:SLP}.

\if0
\begin{figure}[t]
\begin{minipage}[b]{0.6\linewidth}
\centering
  \centerline{\includegraphics[width=0.8\textwidth]{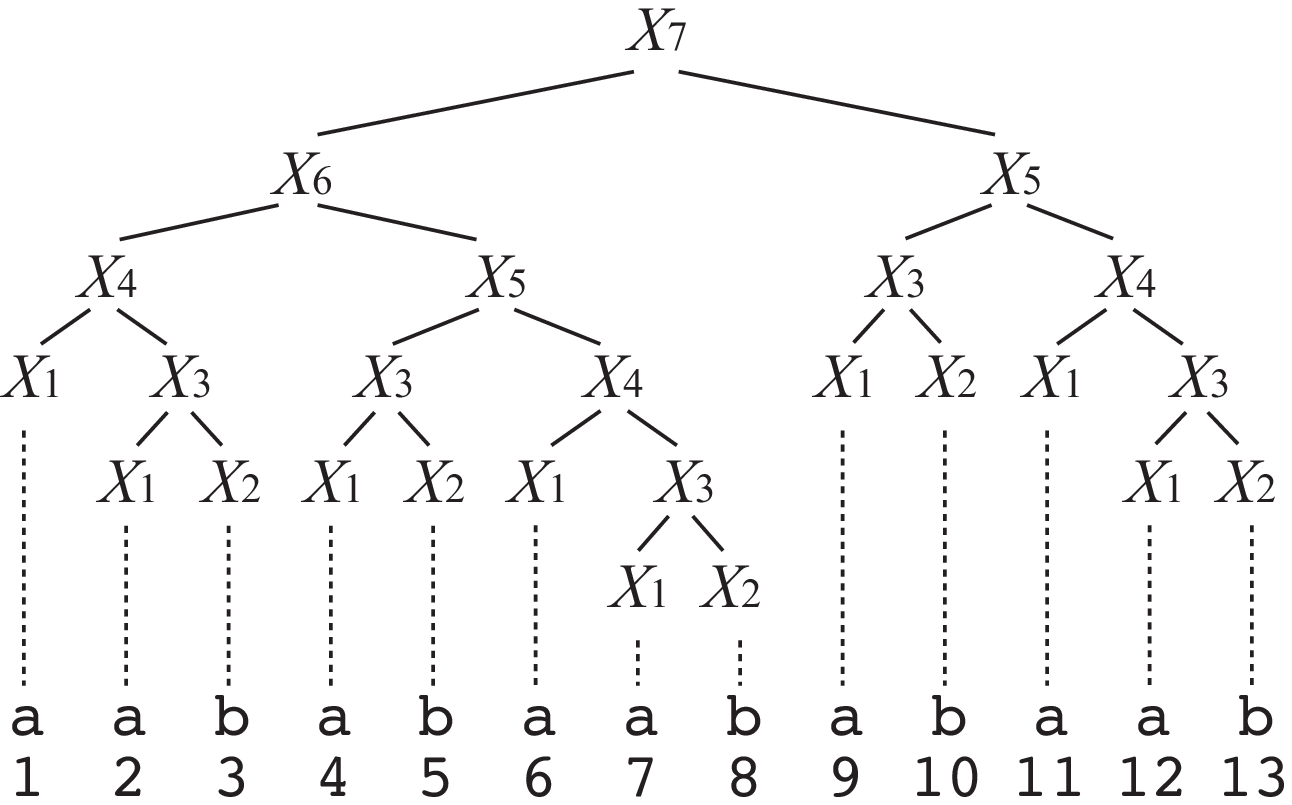}}
  \caption{
    The derivation tree of
    SLP $\mathcal{S} = \{ X_1 \rightarrow \mathtt{a}$, $X_2 \rightarrow \mathtt{b}$, $X_3 \rightarrow X_1X_2$,
    $X_4 \rightarrow X_1X_3$, $X_5 \rightarrow X_3X_4$, $X_6
    \rightarrow X_4X_5$, $X_7 \rightarrow X_6X_5 \}$,
    representing string $s = \mathtt{aababaababaab}$.
  }
  \label{fig:SLP}

\end{minipage}
\hspace{0.1cm}
\begin{minipage}[b]{0.4\linewidth}
\centering
  \centerline{
    \includegraphics[scale=0.9]{lz78dict.eps}
  }
  \caption{LZ78 trie for $\mathtt{aaabaabbbaaaba\$}$.
    Node $i$ represents $f_i$, e.g., $f_4 = \mathtt{aab}$.
  }
  \label{fig:lz78dict}
\end{minipage}
\end{figure}
\fi

\begin{figure}[tb]
 \centerline{\includegraphics[width=0.5\textwidth]{slp.eps}}
 \caption{
   The derivation tree of
   SLP $\mathcal S = \{ X_1 \rightarrow \mathtt{a}$, $X_2 \rightarrow \mathtt{b}$, $X_3 \rightarrow X_1X_2$,
   $X_4 \rightarrow X_1X_3$, $X_5 \rightarrow X_3X_4$, $X_6
   \rightarrow X_4X_5$, $X_7 \rightarrow X_6X_5 \}$,
   representing string $S = \derive(X_7) = \mathtt{aababaababaab}$.
 }
 \label{fig:SLP}
\end{figure}

\if0
\subsection{Lempel Ziv 78 Encoding}
\label{subsec:lz78}

\begin{definition}[LZ78 factorization]
The LZ78-factorization of a string $w$ is the factorization 
$f_1 \cdots f_s$ of $w$, where each LZ78-factor $f_i\in\Sigma^+$ 
for each $1 \leq i \leq s$ is defined as follows:
For convenience, let $f_0 = \varepsilon$. Then,
$f_i = w[p..p + |f_j|]$ where $p = |f_0\cdots f_{i-1}|+1$ and
$f_j (0 \leq j < i)$ is the longest factor used previously, which is a prefix of $w[p..|w|]$.
\end{definition}

The LZ78 encoding of string $w$ is 
a sequence $(k_1, a_{1}), \ldots, (k_s, a_s)$ of pairs
such that each pair $(k_i, a_{i})$ represents the $i$-th LZ78 factor $f_i$,
where $k_i$ is the ID of the previous factor $f_{k_i}$,
and $a_{i}$ is the new character $w[|f_1\cdots f_{i}|]$.
Note that the LZ78 encoding requires $O(s)$ space.
Regarding this pair as a parent and edge label, the factors can also
be represented as a trie of size $O(s)$. (See Fig.~\ref{fig:lz78dict}.)
\fi

\section{Computing Lyndon factorization from SLP}

In this section, we show how, given an SLP $\mathcal{S}$ of $n$ productions
representing string $w$,
we can compute $\LF(w)$ of size $m$ in $O(n^4 + mn^3h)$ time.
We will make use of the following known results:
\begin{lemma}[\cite{Duval83:_facrorizing_words_}] \label{lem:smallest_suffix}
For any string $w$, let $\LF(w) = \ell_1^{p_1}, \ldots, \ell_m^{p_m}$.
Then, $\ell_m = \min_\prec \suffix(w)$, i.e., 
$\ell_m$ is the lexicographically smallest suffix of $w$.
\end{lemma}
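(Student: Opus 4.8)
The plan is to verify directly the two defining properties of the minimum: that $\ell_m$ is itself a suffix of $w$, and that $\ell_m \preceq s$ for every $s \in \suffix(w)$. The first is immediate, since $w = \ell_1^{p_1}\cdots\ell_m^{p_m}$ ends with a copy of $\ell_m$, so $\ell_m \in \suffix(w)$. All the work lies in the inequality $\ell_m \preceq s$. I would isolate two facts to reuse throughout. First, the factors are strictly decreasing, $\ell_1 \succ \cdots \succ \ell_m$, hence $\ell_m \preceq \ell_i$ for every $i$, with equality only at $i = m$. Second, a Lyndon word is strictly smaller than each of its proper nonempty suffixes (the standard equivalent of the conjugate-based definition given here), so $\ell_i \prec t$ whenever $t$ is a proper nonempty suffix of $\ell_i$.

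The technical engine is a small monotonicity observation about the order $\prec$: if $x \preceq y$ and $y$ is a prefix of $z$, then $x \preceq z$, and the conclusion is strict whenever $x \prec y$. I would prove this by splitting on the two ways $x \prec y$ can arise under the paper's definition. Either $x$ is a proper prefix of $y$, in which case $x$ is also a proper prefix of $z$; or $x$ and $y$ first differ at some position $k+1 \le \min(|x|,|y|)$ with $x[k+1]\prec y[k+1]$, in which case $z$ agrees with $y$ through position $|y| \ge k+1$, so $x$ and $z$ first differ at the same position with the same strict inequality. Either way $x \prec z$.

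With these in hand I would classify an arbitrary suffix $s$ by the factor block it enters. If the starting position of $s$ lands inside $\ell_i^{p_i}$, then writing $t$ for the nonempty suffix of $\ell_i$ at which $s$ begins, we get $s = t\,\ell_i^{\,q}\,\ell_{i+1}^{p_{i+1}}\cdots\ell_m^{p_m}$ for some $0 \le q \le p_i-1$ (the trailing product being empty when $i=m$); in particular $t$ is a prefix of $s$. In the boundary case $t=\ell_i$, combining $\ell_m\preceq\ell_i$ with the monotonicity lemma gives $\ell_m\preceq s$, with equality only when $s=\ell_m$. In the interior case $t$ is a proper suffix of $\ell_i$, so the Lyndon property gives $\ell_i\prec t$; chaining with $\ell_m\preceq\ell_i$ yields $\ell_m\prec t$, and the strict form of the monotonicity lemma upgrades this to $\ell_m\prec s$. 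Since these cases exhaust $\suffix(w)$, we conclude $\ell_m\preceq s$ for all $s$, and together with $\ell_m\in\suffix(w)$ this gives $\ell_m=\min_\prec\suffix(w)$.

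The only genuinely delicate point, and the step I would treat most carefully, is the interior case: one must prevent the long tail $\ell_i^{\,q}\ell_{i+1}^{p_{i+1}}\cdots\ell_m^{p_m}$ appended after $t$ from dragging $s$ below $\ell_m$. This is exactly what the monotonicity lemma rules out, and it is why the argument reduces cleanly to comparing $\ell_m$ against a single prefix ($\ell_i$ or $t$) rather than against the whole suffix. A secondary point worth recording explicitly is the characterization of Lyndon words as strings strictly smaller than all their proper suffixes; since it is this property (rather than border-freeness alone) that the interior case needs, I would state the equivalence with the conjugate-based definition as a brief preliminary remark.
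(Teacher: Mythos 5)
Your proof is correct, but there is nothing in the paper to compare it against: the paper states this lemma as a known result, citing Duval's 1983 paper, and gives no proof. Taken on its own merits, your argument is sound and is essentially the standard derivation. The monotonicity observation (if $x \preceq y$ and $y$ is a prefix of $z$, then $x \preceq z$, strictly whenever $x \prec y$) is exactly the right tool to prevent the tail appended after $t$ from dragging $s$ below $\ell_m$, and your case split --- suffix starting at a copy boundary versus properly inside a copy of $\ell_i$ --- is exhaustive, with the equality case correctly isolated to $s = \ell_m$. The one ingredient you invoke without proof, namely that a Lyndon word is strictly smaller than each of its proper nonempty suffixes, is indeed a standard equivalent characterization, and you rightly flag it as a preliminary remark; since the paper already records that Lyndon words are border-free, you could close this gap in one line: if $\ell_i = ut$ with $t$ a proper nonempty suffix, border-freeness means $t$ is not a prefix of $\ell_i$, so the comparison $ut \prec tu$ given by the conjugate-based definition is decided within the first $|t|$ characters, which yields $\ell_i \prec t$ directly. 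With that remark made explicit, your proof is complete and self-contained.
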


\begin{lemma}[\cite{lifshits:DSP:2006:798}] \label{lem:substring_SLP}
Given an SLP $\mathcal{S}$ of size $n$ representing a string $w$ of length $N$,
and two integers $1 \leq i \leq j \leq N$,
we can compute in $O(n)$ time another SLP of size $O(n)$
representing the substring $w[i..j]$.
\end{lemma}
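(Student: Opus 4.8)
The plan is to reduce substring extraction to two one-sided operations, namely \emph{prefix extraction} and \emph{suffix extraction}, and to show that each can be performed by descending along a single root-to-leaf path of the derivation tree while creating only $O(h)$ new variables. Composing the two then yields an SLP for $w[i..j]$ of size $O(n)$ in $O(n)$ time.

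First I would preprocess $\mathcal{S}$ to compute $|X_k| = |\derive(X_k)|$ for every variable in $O(n)$ time, bottom up: $|X_k| = 1$ when $expr_k = a \in \Sigma$, and $|X_k| = |X_{\ell(k)}| + |X_{r(k)}|$ otherwise. These lengths let me decide in constant time, at each internal node, into which child a given target position falls. For prefix extraction I build a variable deriving the length-$t$ prefix of $\derive(X_k)$ by the following recursion. If $X_k \rightarrow a$ then $t = 1$ and the prefix is $X_k$ itself. If $X_k \rightarrow X_{\ell(k)} X_{r(k)}$ and $t \le |X_{\ell(k)}|$, the prefix lies entirely inside the left child, so I recurse on $X_{\ell(k)}$ with the same $t$ and introduce no new variable at this node. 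If $t > |X_{\ell(k)}|$, the whole left child is retained and I recurse on the right child for a prefix of length $t - |X_{\ell(k)}|$; here I introduce exactly one new variable $Y \rightarrow X_{\ell(k)}\, Z$, where $Z$ is the variable produced by the recursive call. Suffix extraction is entirely symmetric. Since each recursive step descends one level of the derivation tree, the recursion halts after at most $h$ steps and creates at most $h$ new variables, reusing the original variables $X_{\ell(k)}$ as fully-included blocks; both operations thus run in $O(h)$ time after preprocessing.

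To obtain $w[i..j]$ I first build an SLP $\mathcal{S}_P$ for the prefix $w[1..j]$, and then apply suffix extraction to $\mathcal{S}_P$ to peel off its leading $i-1$ characters, leaving exactly $w[i..j]$. The SLP $\mathcal{S}_P$ consists of the $n$ original variables together with $O(h)$ fresh ones, and its height is $O(h)$, so the subsequent suffix extraction adds a further $O(h)$ variables at cost $O(h)$. Summing the preprocessing and the two descents, the total construction time is $O(n + h) = O(n)$, and the resulting SLP has $O(n + h) = O(n)$ variables, using $h \le n$.

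The main obstacle is keeping the output size linear rather than proportional to the number of nodes touched or to $N$. The crucial point is that a target position falling wholly inside one child is handled by descending \emph{without} creating a variable, and a child that is entirely included is reused \emph{verbatim} as an already existing variable rather than rebuilt. This ensures that the number of freshly created variables is bounded by the length of a single root-to-leaf path, i.e. $O(h)$, which is exactly what makes both the size and the running time collapse to $O(n)$.
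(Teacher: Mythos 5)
The paper does not prove this lemma at all: it is imported as a known result from the cited reference (Lifshits), so there is no internal proof to compare against. Your reconstruction is correct and is essentially the standard argument behind that result: precompute all $|X_k|$ bottom-up in $O(n)$ time, extract a prefix (resp.\ suffix) by descending a single root-to-leaf path and creating one fresh binary production only at nodes where the cut position crosses from the left child into the right, and compose the two extractions to get $w[i..j]$. The key accounting point you identify---that fully included children are reused verbatim and that descents into a single child create no variable, so only $O(h)$ fresh variables arise per extraction---is exactly what makes the bound work. Two small points you could make explicit: (i) $h \leq n$ is not an extra assumption but follows from the paper's definition of an SLP, since every production $X_i \rightarrow X_{\ell(i)} X_{r(i)}$ has $i > \ell(i), r(i)$, so indices strictly decrease along any path; (ii) after the prefix step, the intermediate SLP $\mathcal{S}_P$ has height $O(h)$ (the fresh chain of length at most $h$ hangs off original variables of height at most $h$), which you assert and which justifies charging only $O(h)$ to the second descent. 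With those remarks your argument is a complete, self-contained proof of the cited lemma.
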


\begin{lemma}[\cite{lifshits:DSP:2006:798}] \label{lem:SLP_period}
Given an SLP $\mathcal{S}$ of size $n$ representing a string $w$ of length $N$,
we can compute the shortest period of $w$ in $O(n^3 \log N)$ time and $O(n^2)$ space.
\end{lemma}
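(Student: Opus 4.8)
The plan is to reduce the shortest-period computation to a controlled number of longest common prefix computations on SLP-represented strings. The starting point is the standard characterization: a positive integer $p < N$ is a period of $w$ if and only if the suffix $w[p+1..N]$ is a prefix of $w$, that is, if and only if $\lcp(w, w[p+1..N]) = N - p$. So the shortest period is the smallest $p$ satisfying this equality. By Lemma~\ref{lem:substring_SLP} I can, for any fixed $p$, construct an SLP of size $O(n)$ for $w[p+1..N]$ in $O(n)$ time, so each test reduces to computing the longest common prefix of two SLPs of size $O(n)$.

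The workhorse primitive is therefore the computation of $\lcp$ of two strings, each presented by an SLP of size $O(n)$. One way to obtain this is to binary search on the prefix length $\ell$: to decide $\lcp \ge \ell$, extract the length-$\ell$ prefixes of both strings (again via Lemma~\ref{lem:substring_SLP}) and test them for equality using the compressed equality-testing machinery for SLPs~\cite{lifshits:DSP:2006:798}. A direct computation avoids the extra logarithmic factor incurred by binary search and, together with precomputed overlap information among the $O(n)$ variables, answers each query in $O(n^2)$ time within $O(n^2)$ space; I would treat this cost as the unit and bound the total running time by the number of queries times $O(n^2)$.

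The real obstacle is that there are up to $N$ candidate periods, which is exponential in $n$, so they cannot be tested individually. Here I would exploit the combinatorial structure of the period set. By the periodicity lemma of Fine and Wilf, the periods decompose by scale: for each $i = 0, 1, \ldots, O(\log N)$ the periods lying in the interval $[2^i, 2^{i+1})$ form a single arithmetic progression, because any two periods in such a narrow band force their difference to be a period as well. This collapses the exponentially many candidates into $O(\log N)$ structured families. For each scale I would recover its progression, namely its first term and common difference, from a bounded number of $\lcp$ queries (locating the extreme periods of the band and reading off the step from the induced mismatch), and then return the smallest period found across all scales.

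Charging $O(n^2)$ per query and $O(n)$ queries to each of the $O(\log N)$ scales yields the claimed $O(n^3\log N)$ time, while the equality and overlap tables dominate the space at $O(n^2)$. I expect the two hardest points to be, first, implementing and analyzing the $\lcp$-of-two-SLPs primitive, since the compressed equality test is itself delicate and its cost is exactly what propagates into the final bound; and second, proving rigorously that the periods within each scale form a single arithmetic progression, which is a genuine application of the periodicity lemma to overlapping borders rather than a routine estimate, and is what guarantees that only $O(\log N)$ families, and hence a polynomial number of queries, need be examined.
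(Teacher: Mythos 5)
This lemma is not proved in the paper at all: it is imported, with citation, from Lifshits~\cite{lifshits:DSP:2006:798}, so your attempt has to be measured against that cited proof, whose engine is fully compressed pattern matching (dynamic-programming tables storing, for each pair of variables, the occurrences of one derived string that cross the junction of the other, encoded as arithmetic progressions). Your reduction of a \emph{single} period test to an lcp computation is fine and matches the paper's own toolkit: $p$ is a period of $w$ iff $\FirstMismatch(X_n, X_n, p+1) = N - p$, so by Lemma~\ref{lem:lcp_slp} each test costs $O(n^2)$ time after $O(n^3)$ preprocessing, with no need to build a new SLP per candidate. The first genuine gap is your structural claim that the periods in $[2^i, 2^{i+1})$ form a single arithmetic progression ``because any two periods in such a narrow band force their difference to be a period.'' Both the claim and its justification are false: the difference of two periods is a period only when Fine and Wilf applies, which requires $p + q \leq N$ and fails for periods close to $N$. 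Concretely, $w = \mathtt{aabaabaa}\cdot\mathtt{b}^{8}\cdot\mathtt{aabaabaa}$ has $|w| = 24$, and its periods smaller than $|w|$ are exactly $16, 19, 22, 23$: all lie in the single band $[16,32)$, they do not form an arithmetic progression, and $19 - 16 = 3$ is not a period. The statement can be repaired by banding \emph{borders} rather than periods: all borders of length in $[2^i, 2^{i+1})$ are borders of the longest one among them, the periods they induce inside that prefix do pairwise satisfy Fine and Wilf, and one still obtains a decomposition of the period set into $O(\log N)$ progressions.

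The second, more serious, gap is the per-band search. Even with the corrected decomposition, you give no method for finding a band's progression --- in particular its longest border --- beyond ``locating the extreme periods of the band,'' which is exactly the hard subproblem: a band contains up to $2^i$ candidate values, exponentially many in $n$, and a mismatching lcp probe at one offset does not exclude neighbouring offsets. What is actually needed is to locate the occurrences of the prefix $w[1..2^i]$ inside the suffix window of length $2^{i+1}-1$; this internal-pattern-matching task is not known to reduce to polynomially many lcp queries, and it is precisely what Lifshits's occurrence tables are designed to answer. Relatedly, your accounting of ``$O(n)$ queries per scale'' is never derived from anything; it is reverse-engineered from the target bound $O(n^3 \log N)$. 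As it stands, the proposal yields a correct polynomial-time test for one candidate period, but not a polynomial-time search over the $N$ candidates, and that search is the substance of the cited result.
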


For any non-empty string $w \in \Sigma^+$,
let $\kouho(w) = \{x \mid x \in \suffix(w), \exists y \in \Sigma^+ \mbox{ s.t. } xy = \min_\prec \suffix(wy)\}$.
Intuitively, $\kouho(w)$ is the set of suffixes of $w$
which are a prefix of the lexicographically 
smallest suffix of string $wy$, for some non-empty string $y \in \Sigma^+$.

The following lemma may be almost trivial,
but will play a central role in our algorithm.
\begin{lemma} \label{lem:shorter_kouho_prefix}
For any two strings $u, v \in \kouho(w)$ with $|u| < |v|$,
$u$ is a prefix of $v$.
\end{lemma}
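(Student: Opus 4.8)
The plan is to argue by contradiction, exploiting the fact that any two elements of $\kouho(w)$ are suffixes of $w$, so the shorter one $u$ is already a suffix of the longer one $v$; the whole content of the lemma is that $u$ is \emph{moreover} a prefix of $v$. First I would unpack the definition to obtain witnesses: since $u, v \in \kouho(w)$, there exist $y_u, y_v \in \Sigma^+$ with $u y_u = \min_\prec \suffix(w y_u)$ and $v y_v = \min_\prec \suffix(w y_v)$. The observation to keep in mind throughout is that because both $u$ and $v$ are suffixes of $w$, for any string $y$ both $uy$ and $vy$ belong to $\suffix(wy)$, and hence each is $\succeq$ the corresponding lexicographic minimum.

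Next I would assume for contradiction that $u$ is not a prefix of $v$, and set $j = \lcp(u,v)$. Because $u$ is a suffix of $v$ with $|u| < |v|$, failing to be a prefix forces $j < |u|$, so there is a genuine character mismatch at position $j+1$ lying strictly inside both $u$ and $v$: either $u[j+1] \prec v[j+1]$ or $v[j+1] \prec u[j+1]$. The crucial point is that since $j+1 \le |u| < |v|$, this mismatch is encountered \emph{before} either appended string is reached, so for any $y$ we have $\lcp(uy, vy) = j$ and the comparison of $uy$ against $vy$ is decided by the same characters $u[j+1]$ versus $v[j+1]$, regardless of $y$.

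With this in hand the two cases close symmetrically. If $v[j+1] \prec u[j+1]$, then taking $y = y_u$ gives $v y_u \prec u y_u$; but $v y_u \in \suffix(w y_u)$, contradicting that $u y_u$ is the $\prec$-smallest suffix of $w y_u$. If instead $u[j+1] \prec v[j+1]$, then taking $y = y_v$ gives $u y_v \prec v y_v$; since $u y_v \in \suffix(w y_v)$, this contradicts the minimality of $v y_v$. Either way we reach a contradiction, so $u$ must be a prefix of $v$.

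I expect the only delicate step to be the bookkeeping in the second paragraph, namely verifying that the mismatch at position $j+1$ genuinely sits inside both $u$ and $v$, so that appending a common suffix cannot alter the outcome of the comparison; this is exactly where the hypothesis $|u| < |v|$ is used, guaranteeing $j+1 \le |u| < |v|$. Once that is pinned down, the ability to transport a witness $y_u$ or $y_v$ from one suffix to the other is immediate, which confirms the author's remark that the lemma is ``almost trivial''.
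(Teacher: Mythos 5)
Your proof is correct and follows essentially the same route as the paper's: both arguments hinge on the observation that a character mismatch between $u$ and $v[1..|u|]$ decides the comparison of $uy$ versus $vy$ independently of $y$, contradicting the $\kouho(w)$-membership of $u$ in one case and of $v$ in the other. Your write-up merely makes explicit the witnesses $y_u, y_v$ and the bookkeeping about the mismatch position that the paper leaves implicit.
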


\begin{proof}
If $v[1..|u|] \prec u$,
then for any non-empty string $y$, $vy \prec uy$.
However, this contradicts that $u \in \kouho(w)$.
If $v[1..|u|] \succ u$,
then for any non-empty string $y$, $vy \succ uy$.
However, this contradicts that $v \in \kouho(w)$.
Hence we have $v[1..|u|] = u$.
\qed 
\end{proof}

\begin{lemma} \label{lem:LF_shortest_kouho}
For any string $w$, let $\ell = \min_\prec \suffix(w)$.
Then, the shortest string of $\kouho(w)$ is $\ell^{p}$,
where $p \geq 1$ is the maximum integer
such that $\ell^p$ is a suffix of $w$. 
\end{lemma}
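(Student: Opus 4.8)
The plan is to fix the decomposition $w = u\,\ell^{p}$, where $\ell=\min_\prec\suffix(w)$ is a Lyndon word (hence border-free, and therefore primitive, by the facts recalled before the definition of $\LF$), and where $u$ does not end with $\ell$ (this is exactly the maximality of $p$). I would then split the claim into two halves: (i) $\ell^{p}\in\kouho(w)$, and (ii) no suffix of $w$ that is strictly shorter than $\ell^{p}$ lies in $\kouho(w)$. Since any such shorter suffix is a proper suffix of $\ell^{p}$, and since by Lemma~\ref{lem:shorter_kouho_prefix} the members of $\kouho(w)$ form a chain under the prefix order, (i) and (ii) together single out $\ell^{p}$ as the unique shortest element.

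For (ii) I would take a proper suffix $x$ of $\ell^{p}$ together with a witness $y$ satisfying $xy=\min_\prec\suffix(wy)$ and derive a contradiction, distinguishing two cases. In the \emph{mid-block} case $x=\ell[k..|\ell|]\,\ell^{j}$ with $k\ge 2$: then $\ell^{j+1}y$ is also a suffix of $wy$ (as $j+1\le p$), and because $\ell$ is Lyndon and border-free we have $\ell\prec\ell[k..|\ell|]$ with a genuine mismatch already inside the first block; that mismatch forces $\ell^{j+1}y\prec xy$, contradicting minimality. In the \emph{full-power} case $x=\ell^{i}$ with $1\le i<p$: minimality of $xy$ against the suffix $y$ and against the suffix $\ell^{i+1}y$ gives $\ell^{i}y\preceq y$ and $y\preceq\ell y$; since prepending a common factor preserves $\preceq$, the second inequality iterates to $y\preceq\ell y\preceq\cdots\preceq\ell^{i}y$, and combined with $\ell^{i}y\preceq y$ this collapses to $y=\ell y$, which is impossible on length grounds.

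For (i) I would exhibit a witness explicitly: choose a symbol $b\succ\ell[1]$ (which exists unless $w$ is unary, a degenerate case) and set $y=\ell^{K}b$ for a large $K$, so that $\ell^{p}y=\ell^{p+K}b$ is the suffix of $wy$ starting at position $|u|+1$. The competing suffixes of $wy$ fall into three families. Pure powers $\ell^{i}b$ with $i<p+K$ are beaten because $b\succ\ell[1]$ makes the \emph{longer} power strictly smaller, so our target is smallest among these; mid-block suffixes $\ell[k..|\ell|]\,\ell^{j}b$ are beaten exactly as in the mid-block case above; the genuine work lies with the suffixes $\gamma\,\ell^{p+K}b$ that begin inside $u$, where $\gamma$ ranges over the non-empty suffixes of $u$.

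The crux, and what I expect to be the main obstacle, is this last comparison. I would reduce it to the statement that for every non-empty suffix $\gamma\,\ell^{p}$ of $w$ one has $\ell^{\omega}\preceq\gamma\,\ell^{\omega}$ (writing $\ell^{\omega}$ for the one-sided infinite power of $\ell$): a strict reversal at some position would, after sliding back to a multiple of $|\ell|$, produce a suffix of $w$ strictly smaller than $\ell=\min_\prec\suffix(w)$, using border-freeness to keep that mismatch strict — a contradiction. If the inequality is strict its first mismatch sits at a position independent of $K$, so taking $K$ large places that mismatch within the leading $\ell^{p+K}$ and yields $\ell^{p+K}b\prec\gamma\,\ell^{p+K}b$; and the remaining possibility $\ell^{\omega}=\gamma\,\ell^{\omega}$ cannot occur, since primitivity of the Lyndon word $\ell$ would then force $\gamma=\ell^{s}$, contradicting that $u$ does not end with $\ell$. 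This periodicity/primitivity control over suffixes of $w$ that open with long prefixes of $\ell^{\omega}$ is the delicate point; once it is in hand, the target $\ell^{p}y$ is strictly smaller than every other suffix of $wy$, so $\ell^{p}y=\min_\prec\suffix(wy)$ and $\ell^{p}\in\kouho(w)$.
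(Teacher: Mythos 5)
Your proposal is correct, but it follows a genuinely different route from the paper's proof, and in two places a more careful one. The paper first observes that any witness must satisfy $y \succ \ell$, and then argues suffix by suffix: suffixes of $w$ shorter than $\ell^p$ lose to $\ell^p y$ for \emph{every} $y \succ \ell$ (pure powers via the chain $\ell^p y \prec \cdots \prec \ell y \prec y$, mid-block suffixes via border-freeness), while each longer suffix is written as $t=\ell^q u$ with $\ell \prec u$ and is beaten by choosing the witness $y=\ell^{q-p}u'$ with $u' \prec u$, \emph{tailored to that particular $t$}. You differ on both halves. For non-membership of shorter suffixes you argue by contradiction against explicit competitor suffixes of $wy$ ($\ell^{j+1}y$ in the mid-block case; $y$ and $\ell^{i+1}y$ in the full-power case, where the cancellation chain $y \preceq \ell y \preceq \cdots \preceq \ell^{i}y \preceq y$ collapses to an impossibility), so you never need the paper's preliminary claim about witnesses being $\succ \ell$. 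For membership you build one uniform witness $y=\ell^{K}b$ with $b \succ \ell[1]$ and $K$ large, and defeat all suffixes of $wy$ simultaneously, the crux being $\ell^{\omega}\prec\gamma\,\ell^{\omega}$ for every nonempty suffix $\gamma$ of $u$, which you correctly derive from minimality of $\ell$, border-freeness, and primitivity. This buys rigor exactly where the paper is thin: the paper never explains why a single $y$ works against \emph{all} longer suffixes at once (its witness depends on $t$), nor does it ever compare $\ell^{p}y$ against the proper suffixes of $y$ itself; your single-witness construction handles both explicitly. The price is the $\ell^{\omega}$ machinery and the assumption that some letter exceeds $\ell[1]$; note that in your ``degenerate'' case (when $\ell[1]$ is the largest letter of $\Sigma$, which forces $w$ to be unary) the set $\kouho(w)$ is in fact empty, so the lemma itself tacitly assumes such a letter exists---a caveat the paper's proof shares without comment.
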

\begin{proof}
For any string $x \in \kouho(w)$,
and any non-empty string $y$,
$xy = \min_\prec\suffix(wy)$
holds only if $y \succ \ell$. 

Firstly, we compare $\ell^p$ with the suffixes $s$ of $w$
shorter than $\ell^p$,
and show that $\ell^py \prec s y$ holds for \emph{any} $y \succ \ell$.
Such suffixes $s$ are divided into two groups:
(1) If $s$ is of form $\ell^k$ for any integer $1 \leq k < p$,
then $\ell^{p}y \prec \ell^ky = sy \prec y$ holds for any $y \succ \ell$;
(2) If $s$ is not of form $\ell^k$, then since $\ell$ is border-free,
$\ell$ is not a prefix of $s$, and $s$ is not a prefix of $\ell$, either.
Thus $\ell^p \prec s$ holds, implying that $\ell^py \prec sy$ for any $y \succ \ell$.

\sinote{added}{%
Secondly, we compare $\ell^p$ with the suffixes $t$ of $w$
longer than $\ell^p$,
and show that $\ell^py \prec t y$ holds for \emph{some} $y \succ \ell$.
By Lemma~\ref{lem:shorter_kouho_prefix},
$t = \ell^q u$ holds, where
$q \geq p$ is the maximum integer such that $\ell^q$ is a prefix of $t$,
and $u \in \Sigma^+$.
By definition, $\ell \prec u$ and $\ell$ is not a prefix of $u$.
Choosing $y = \ell^{q-p} u^\prime$ with $u^\prime \prec u$,
we have $\ell^p y = \ell^{q} u^\prime \prec \ell^q u = t \prec ty$.
Hence, $\ell^p \in \kouho(w)$ and no shorter strings exist in $\kouho(w)$.
}%
\qed
\end{proof}

By Lemma~\ref{lem:smallest_suffix} and Lemma~\ref{lem:LF_shortest_kouho},
computing the last Lyndon factor $\ell_m^{p_m}$ of $w = \derive(X_n)$
reduces to computing $\kouho(X_n)$ for the last variable $X_n$.
In what follows, we propose a dynamic programming algorithm to compute 
$\kouho(X_i)$ for each variable.
Firstly we show the number of strings in $\kouho(X_i)$ is $O(\log N)$,
where $N = |\derive(X_n)| = |w|$.

\begin{lemma} \label{lem:kouho_twice_longer}
For any string $w$,
let $s_j$ be the $j$th shortest string of $\kouho(w)$.
Then, $|s_{j+1}| > 2|s_j|$ for any $1 \leq j < |\kouho(w)|$.
\end{lemma}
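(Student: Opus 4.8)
The plan is to argue by contradiction, using only that $s_j\in\kouho(w)$ together with the periodic structure forced by the assumption. Write $s=s_j$ and $t=s_{j+1}$, and suppose $|t|\le 2|s|$. Since $s$ and $t$ are both suffixes of $w$ with $|s|<|t|$, the shorter string $s$ is a suffix of $t$; by Lemma~\ref{lem:shorter_kouho_prefix} it is also a prefix of $t$. Hence $s$ is a border of $t$, so $\pi:=|t|-|s|$ is a period of $t$, and the standing assumption $|t|\le 2|s|$ is exactly $\pi\le|s|$. Writing $t=sr$ with $|r|=\pi$ (so $r$ is the suffix of $t$ following the prefix $s$), I would first record the elementary periodic-comparison fact: for every string $y$ one has $ry\prec y\iff y\succ r^{\infty}$ and $ry\succ y\iff y\prec r^{\infty}$, where $r^{\infty}=rrr\cdots$. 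I would prove this by factoring $y=r^{a}y'$ with $a$ maximal (so $r$ is not a prefix of $y'$) and comparing $ry=r^{a+1}y'$ with $y=r^{a}y'$ after cancelling the common prefix $r^{a}$.

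The crux is to exhibit a \emph{third} aligned suffix produced by the period. Let $s'$ be the suffix of $w$ of length $|s|-\pi$ (possibly empty). Because $t$, and hence its prefix $s$, has period $\pi$, I would check that $s=s'r$ with the \emph{same} $r$ as in $t=sr$: indeed $r$ and the length-$\pi$ suffix of $s$ are the images of one another under the shift by $\pi$ guaranteed by the period. Consequently the two boundary comparisons collapse to a single quantity: comparing $t$ against $s$ reduces, after cancelling the common prefix $s$, to $ry$ versus $y$, and comparing $s$ against $s'$ reduces, after cancelling the common prefix $s'$, to the \emph{same} $ry$ versus $y$.

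Now I would invoke $s\in\kouho(w)$ to fix a nonempty $y$ with $sy=\min_\prec\suffix(wy)$. Since $t$, $s$, $s'$ are all suffixes of $w$, the strings $ty$, $sy$ and $s'y$ (with $s'y=y$ when $s'=\varepsilon$) are pairwise distinct suffixes of $wy$, so minimality of $sy$ forces $sy\prec ty$ and $sy\prec s'y$. By the collapsed comparisons and the periodic-comparison fact, $sy\prec ty$ gives $ry\succ y$, i.e.\ $y\prec r^{\infty}$, whereas $sy\prec s'y$ gives $ry\prec y$, i.e.\ $y\succ r^{\infty}$ — a contradiction. Hence no such $y$ exists, so $s\notin\kouho(w)$, contradicting $s=s_j\in\kouho(w)$; therefore $|s_{j+1}|>2|s_j|$. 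I expect the main obstacle to be verifying that the upper competitor $t$ and the lower competitor $s'$ impose \emph{opposite} constraints on $y$ relative to $r^{\infty}$ — that is, checking that both comparisons genuinely reduce to the single pair $(ry,y)$ — and handling the boundary case $|t|=2|s|$, where $s'$ is empty and the lower competitor is the suffix $y$ itself.
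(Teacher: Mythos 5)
Your proof is correct, and it follows the same broad strategy as the paper's --- assume $|s_{j+1}|\le 2|s_j|$, use Lemma~\ref{lem:shorter_kouho_prefix} to get the border/period structure, and contradict $s_j\in\kouho(w)$ by exhibiting competitor suffixes of $wy$ that sandwich $s_jy$ --- but your execution is genuinely different and, in my view, cleaner. The paper decomposes $s_{j+1}=u^k v$ and $s_j=u^{k-1}v$ with $u=s_j[1..q]$ and $v$ a proper prefix of $u$ (i.e.\ it strips \emph{all} copies of the period), takes $u^kvy$ and $vy$ as competitors, and argues by cases: if $uvy\prec vy$ then a chain of prependings gives $s_{j+1}y\prec s_jy$, while if $vy\prec uvy$ the reverse chain gives $vy\prec s_jy$; it also splits off $s_{j+1}=s_js_j$ as a separate case, handled via the preliminary claim that any witness satisfies $y\succ\ell$ and hence $s_j\prec y$. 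You instead shift by a \emph{single} period: with $t=sr$, $s=s'r$ and $s'$ the suffix of length $|s|-\pi$, both comparisons $ty$ vs.\ $sy$ and $sy$ vs.\ $s'y$ cancel to the identical pair $(ry,y)$, so minimality of $sy$ forces $y\prec ry$ and $ry\prec y$ simultaneously --- an immediate contradiction with no case analysis, no chains, and the boundary case $\pi=|s|$ (where $s'=\varepsilon$) handled uniformly rather than separately. Two remarks: first, your detour through $r^{\infty}$ is unnecessary overhead, since $y\prec ry$ and $ry\prec y$ already contradict each other by trichotomy of $\prec$ on (finite) strings; second, your route has the added benefit of working for an \emph{arbitrary} witness $y$, so it avoids the paper's reliance on the claim ``$y\succ\ell$ implies $s_j\prec y$'' (which the paper states somewhat loosely and needs precisely for its square case $s_{j+1}=s_js_j$). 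The only step you flagged as a worry --- that $t$ and $s'$ share the same length-$\pi$ tail $r$ --- does go through: $\pi$ is a period of $t$, hence of its prefix $s$ (which has length $\ge\pi$ by assumption), and the period identifies $r=t[|s|+1..|t|]$ with $s[|s|-\pi+1..|s|]$.
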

\begin{proof}
Let $\ell = \min_\prec \suffix(w)$,
and $y$ any string such that $y \succ \ell$.
It follows from Lemma~\ref{lem:shorter_kouho_prefix}
that $\ell$ is a prefix of any string $s_j \in \kouho(w)$,
and hence $s_j \prec y$ holds.

Assume on the contrary that $|s_{j+1}| \leq 2|s_j|$.
If $|s_{j+1}| = 2|s_j|$, i.e., $s_{j+1} = s_j s_j$, then $s_{j+1} y = s_j s_j y \prec s_j y$ holds, 
but this contradicts that $s_j \in \kouho(w)$.
Hence $s_{j+1} \neq s_j s_j$.
If $|s_{j+1}| < 2|s_{j}|$,
by Lemma~\ref{lem:shorter_kouho_prefix}, 
$s_j$ is a prefix of $s_{j+1}$, and therefore 
$s_{j}$ has a period $q$ such that $s_{j+1} = u^k v$ and $s_{j} = u^{k-1} v$, 
where $u = s_j[1..q]$, $k \geq 1$ is an integer,
and $v$ is a proper prefix of $u$.
There are two cases to consider:
\sinote{modified}{%
(1) If $uvy \prec vy$, then $u^kvy \prec u^{k-1}vy = s_{j}y$. 
(2) If $vy \prec uvy$, then $vy \prec uvy \prec u^2vy \prec \cdots \prec u^{k-1}vy = s_{j}y$.
It means that $\min_\prec \{u^kvy, vy\} \prec s_{j}y$ for any $y \succ \ell$, however, this contradicts that $s_{j} \in \kouho(w)$.
Hence $|s_{j+1}| > 2|s_{j}|$ holds.
}%
\qed
\end{proof}
Since $s_{j}$ is a suffix of $s_{j+1}$, 
it follows from Lemma~\ref{lem:shorter_kouho_prefix}
and Lemma~\ref{lem:kouho_twice_longer} that
$s_{j+1} = s_{j}ts_{j}$ with some non-empty string $t \in \Sigma^+$.
This also implies that the number of strings in $\kouho(w)$ is $O(\log N)$,
where $N$ is the length of $w$.
By identifying each suffix of $\kouho(X_i)$ with its length,
and using Lemma~\ref{lem:kouho_twice_longer},
$\kouho(X_i)$ for all variables can be stored in a total of $O(n \log N)$ space.

For any two variables $X_i, X_j$ of an SLP $\mathcal{S}$
and a positive integer $k$ satisfying $|X_i| \geq k + |X_j| -1$,
consider the $\FirstMismatch$ function such that
$\FirstMismatch(X_i, X_j, k) = \lcp(\derive(X_i)[k..|X_i|], \derive(X_j))$,
i.e., it returns the length of the lcp of the suffix of $\derive(X_i)$
starting at position $k$ and $X_j$.

\begin{lemma}[\cite{MasamichiCPM97,lifshits07:_proces_compr_texts}] \label{lem:lcp_slp}
We can preprocess a given SLP $\mathcal{S}$ of size $n$ 
in $O(n^3)$ time and $O(n^2)$ space so that
$\FirstMismatch(X_i, X_j, k)$ can be answered in $O(n^2)$ time.
\end{lemma}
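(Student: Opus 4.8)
The plan is to reduce the general offset query to two ingredients: a precomputed table of pairwise longest common prefixes of full variable expansions, together with a way to decompose a suffix $\derive(X_i)[k..|X_i|]$ into a short sequence of whole variable expansions. Concretely, for every pair of variables I would precompute $L(i,j) = \lcp(\derive(X_i), \derive(X_j))$, the length of the longest common prefix of the two full expansions; there are $O(n^2)$ such values, stored in an $n \times n$ table, which already accounts for the $O(n^2)$ space. The query $\FirstMismatch(X_i,X_j,k)$ is then answered by walking down the derivation tree of $X_i$ to position $k$: at each node on this path where we branch left, the corresponding right child contributes its entire expansion to the suffix, so $\derive(X_i)[k..|X_i|]$ is a concatenation $\derive(V_1)\cdots\derive(V_t)$ of $t = O(h)$ existing variable expansions (plus one leading terminal). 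Matching this concatenation against $\derive(X_j)$ block by block, using $L$ for each aligned block, locates the first mismatch.

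Two entangled difficulties arise, and both stem from the same phenomenon: a match crossing a variable boundary. First, when a block $\derive(V_s)$ matches a prefix of the still-unmatched part of $\derive(X_j)$ in full, the next block must be compared not against $\derive(X_j)$ from its start but against a suffix $\derive(X_j)[q..|X_j|]$ at some offset $q$; this is itself another offset query, so the query procedure is self-referential. Second, the table entries cannot be filled by a single child comparison for the same reason. For the DP I would compute $L(i,j)$ by induction on $|X_i|+|X_j|$: writing $X_i \to X_{\ell(i)}X_{r(i)}$ and $X_j \to X_{\ell(j)}X_{r(j)}$ and assuming WLOG $|X_{\ell(i)}| \le |X_{\ell(j)}|$, the comparison of $\derive(X_{\ell(i)})$ against $\derive(X_j)$ never leaves $\derive(X_{\ell(j)})$, hence equals the already-known $L(\ell(i),\ell(j))$. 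If that value is smaller than $|X_{\ell(i)}|$, the mismatch lies inside the left child and $L(i,j)$ equals it; otherwise the left child matches fully and we continue into $\derive(X_{r(i)})$ aligned against $\derive(X_j)[|X_{\ell(i)}|+1..|X_j|]$, which is again an offset query, but now against strictly smaller variables.

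The observation that makes the whole scheme terminate, and that I would prove carefully, is that every offset query spawned during the process reduces, after the suffix decomposition, to comparisons between pairs of variables that are strictly smaller than the pair currently being processed. Indeed, decomposing $\derive(X_j)[q..|X_j|]$ yields blocks that are proper descendants of $X_j$, and these are matched against descendants of $X_{r(i)}$; peeling one more level leaves only pairs of proper descendants of both $X_i$ and $X_j$, whose $L$-values are already available. Thus the recursion bottoms out, and ordering the computation by increasing total expansion length fills the table consistently; charging each aligned-block lookup to the $O(n)$ possible block lengths and the $O(n)$-size decomposition gives $O(n^2)$ per query and $O(n^3)$ total preprocessing.

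The main obstacle is precisely this boundary-crossing bookkeeping: a plain table of full-expansion LCPs does not answer offset queries, and naively recursing on offsets risks either looping or blowing up the running time. The crux of the proof is to show that the decomposition of a suffix into $O(h)$ whole-variable blocks, combined with the size-based induction, confines every sub-comparison to strictly smaller variable pairs, so that the $O(n^2)$ table suffices and each query costs only $O(n^2)$. I would spend most of the effort verifying this descent argument and the associated cost accounting, since the correctness of the per-block matching and of the $L$-table recursion is otherwise routine. These are the bounds established by Miyazaki--Shinohara--Takeda and by Lifshits.
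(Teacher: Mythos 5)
There is a genuine gap, and it sits exactly where you predicted the crux would be. (Note first that the paper does not prove this lemma at all: it imports it from Miyazaki--Shinohara--Takeda and Lifshits, so the relevant comparison is against the technique of those works.) Your scheme --- an $n \times n$ table of \emph{aligned} values $L(i,j) = \lcp(\derive(X_i),\derive(X_j))$ plus decomposition of the offset suffix into $O(h)$ whole-variable blocks --- cannot give $O(n^2)$-time queries, and the descent argument does not repair it. The problem is that your process is charged per \emph{occurrence}, not per \emph{variable pair}: every time the shorter front block matches in full, you must re-decompose the remainder of the longer block, which spawns a fresh offset and fresh (smaller) blocks, and the same pair of variables can reappear at many different relative shifts. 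Concretely, take the balanced SLP for the unary string $a^N$, $N = 2^{n-1}$ (productions $X_1 \rightarrow a$ and $X_p \rightarrow X_{p-1}X_{p-1}$), and the query $\FirstMismatch(X_n, X_n, 2)$. The decomposition of $\derive(X_n)[2..N]$ has a front block of length $1$; it matches fully, the other side is re-decomposed and now \emph{its} front block has length $1$, and so on: every step advances the matching front by $O(1)$ characters while recreating a length-one front block, so the block-by-block matching performs $\Theta(N)$ table lookups --- exponential in $n$. Memoizing on variable pairs does not help, because the walk depends on offsets, which range over $N$ values; for the same reason your size-based induction for filling $L(i,j)$ itself terminates but gives no $O(n)$ bound on the work per entry, so the $O(n^3)$ preprocessing claim also fails.

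The missing idea --- and the actual content of the cited results --- is periodicity. Miyazaki et al.\ and Lifshits do not tabulate plain lcp values; for each pair $(X_i,X_j)$ they tabulate the occurrences of $\derive(X_i)$ in $\derive(X_j)$ that cross the boundary between $X_{\ell(j)}$ and $X_{r(j)}$, and the key lemma (via the Fine--Wilf periodicity theorem) is that this set of crossing occurrences forms an arithmetic progression, hence is storable in $O(1)$ words and computable in $O(n)$ time from previously computed entries, giving $O(n^3)$ preprocessing and $O(n^2)$ space. Queries then use these progressions to leap across periodic stretches in $O(1)$ time per variable pair instead of advancing match by match; that is precisely what tames instances like $a^N$, on which any per-occurrence procedure (yours included) takes time proportional to the decompressed length. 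Your reduction to aligned comparisons and your termination argument are fine as far as they go, but without the arithmetic-progression machinery the stated complexity bounds do not follow.
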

For each variable $X_i$
we store the length $|X_i|$ of the string derived by $X_i$.
It requires a total of $O(n)$ space
for all $1 \leq i \leq n$,
and can be computed in a total of $O(n)$ time 
by a simple dynamic programming algorithm.
Given a position $j$ of the uncompressed string $w$ of length $N$,
i.e., $1 \leq j \leq N$,
we can retrieve the $j$th character $w[j]$ in $O(n)$ time
by a simple binary search on the derivation tree of $X_n$ using the lengths stored in the variables.
Hence, we can lexicographically compare $\derive(X_i)[k..|X_i|]$ and $\derive(X_j)$
in $O(n^2)$ time, after $O(n^3)$-time preprocessing.

The following lemma shows a dynamic programming approach
to compute $\kouho(X_i)$ for each variable $X_i$.
\sinote*{added}{%
We will mean by a sorted list of $\kouho(X_i)$
the list of the elements of $\kouho(X_i)$ sorted 
in increasing order of length.
}%

\begin{lemma} \label{lem:kouho_O(n^3)}
Let $X_i = X_\ell X_r$ be any production of 
a given SLP $\mathcal{S}$ of size $n$.
Provided that sorted lists for 
$\kouho(X_\ell)$ and $\kouho(X_r)$
are already computed, 
a sorted list for $\kouho(X_i)$ can be computed in $O(n^3)$ time and $O(n^2)$ space.
\end{lemma}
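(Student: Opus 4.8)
The plan is to show that $\kouho(X_i)$ lies inside a small candidate set that is cheap to generate from the two input lists, and then to filter that set by lexicographic comparisons realised through $\FirstMismatch$ queries.

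First I would establish the structural containment
\[
\kouho(X_i) \subseteq \{\, x\,\derive(X_r) : x \in \kouho(X_\ell) \,\} \cup \kouho(X_r) =: C .
\]
Split the suffixes of $\derive(X_i)=\derive(X_\ell)\derive(X_r)$ into those lying entirely inside $\derive(X_r)$ and those starting inside $\derive(X_\ell)$. If $c\in\kouho(X_i)$ lies inside $\derive(X_r)$, then since $\suffix(\derive(X_r)y)\subseteq\suffix(\derive(X_i)y)$, any witness $y$ for $cy=\min_\prec\suffix(\derive(X_i)y)$ also witnesses $cy=\min_\prec\suffix(\derive(X_r)y)$, so $c\in\kouho(X_r)$. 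If $c\in\kouho(X_i)$ starts inside $\derive(X_\ell)$, write $c=x\,\derive(X_r)$ with $x$ a suffix of $\derive(X_\ell)$; taking $y'=\derive(X_r)\,y$ for a witness $y$ gives $x y'=\min_\prec\suffix(\derive(X_\ell)y')$, hence $x\in\kouho(X_\ell)$. By Lemma~\ref{lem:kouho_twice_longer} both $\kouho(X_\ell)$ and $\kouho(X_r)$ have $O(\log N)$ elements, so $|C|=O(\log N)=O(n)$ (using $N\le 2^{n-1}$). Merging the two sorted input lists by length yields $C$ sorted by length, each candidate stored as a start position in $\derive(X_i)$.

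Next I would reduce membership testing to comparisons \emph{among candidates only}. The criterion I aim to prove is that $c\in\kouho(X_i)$ iff there is a \emph{single} string $y$ (which, as in the proofs of Lemma~\ref{lem:LF_shortest_kouho} and Lemma~\ref{lem:kouho_twice_longer}, may be taken with $y\succ\ell$ for $\ell=\min_\prec\suffix(\derive(X_i))$) such that $cy\prec c'y$ for every $c'\in C\setminus\{c\}$. The key observation is that for any $y$ the minimal suffix of $\derive(X_i)y$ begins at a position that is itself an element of $\kouho(X_i)\subseteq C$; thus beating the superset $C$ certifies global minimality, while conversely a witness for $c$ beats every suffix, in particular every candidate. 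For a fixed pair, comparing $cy$ with $c'y$ falls into three cases under the prefix order: $c,c'$ incomparable (the order is $y$-independent and we need $c\prec c'$), $c'$ a proper prefix of $c$ with $c=c'r$ (we need $ry\prec y$), or $c$ a proper prefix of $c'$ with $c'=cr'$ (we need $y\prec r'y$); a suitable $y\succ\ell$ satisfies all $y$-dependent constraints simultaneously, exactly as in the preceding lemmas. Note that the shortest surviving candidate is then $\ell^{p}$ by Lemma~\ref{lem:LF_shortest_kouho}, so no explicit period computation (which would cost $O(n^{3}\log N)$) is needed inside this lemma.

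For the implementation I would sweep $C$ from shortest to longest, maintaining the confirmed prefix-chain of $\kouho(X_i)$ and deciding each candidate with $O(1)$ lexicographic comparisons against a constant number of reference suffixes. Each comparison of two suffixes of $\derive(X_i)$ reduces to a constant number of $\FirstMismatch$ queries on the original variables by exploiting the common suffix $\derive(X_r)$: comparing $x_1\derive(X_r)$ with $x_2\derive(X_r)$ reduces to $\FirstMismatch(X_i,X_r,k)$ at an appropriate shift $k$, and comparisons involving a suffix of $\derive(X_r)$ are handled analogously. By Lemma~\ref{lem:lcp_slp} each query costs $O(n^{2})$ after one global $O(n^{3})$-time, $O(n^{2})$-space preprocessing that is shared across all variables. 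Since $|C|=O(n)$ and the sweep performs $O(1)$ comparisons per candidate, it issues $O(n)$ queries, giving $O(n^{3})$ time and $O(n^{2})$ space, as claimed.

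The main obstacle is the decision rule of the previous two paragraphs: proving that one common $y$ (not a $c'$-dependent one) beats \emph{all} candidates, so that membership is settled by $O(1)$ comparisons in a single length-ordered sweep rather than by $\Theta(\log^{2}N)$ all-pairs comparisons, which would cost $O(n^{4})$ per variable and break the bound. Pinning down the sweep invariant—precisely which reference suffixes to compare against, and why a constant number of them certifies existence of the shared $y$—is the crux; the containment and the $\FirstMismatch$ accounting are comparatively routine.
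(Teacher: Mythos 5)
Your structural containment $\kouho(X_i) \subseteq \{\, x\,\derive(X_r) : x \in \kouho(X_\ell) \,\} \cup \kouho(X_r)$ is correct and is exactly the candidate universe the paper works with (there realized as a list $D_i$, initialized to $\kouho(X_r)$ and extended by strings $s \cdot \derive(X_r)$ with $s \in \kouho(X_\ell)$ taken in increasing length order), and your complexity accounting --- $O(\log N)=O(n)$ candidates, $O(n^2)$ per $\FirstMismatch$ query after a shared $O(n^3)$-time, $O(n^2)$-space preprocessing --- matches the paper's. But the part you yourself flag as ``the crux'' is a genuine gap, and the certificate you are trying to construct (a single witness $y \succ \ell$ with $cy \prec c'y$ for all other candidates $c'$, verified by $O(1)$ comparisons at the moment $c$ is processed) is the wrong target. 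First, your ``iff'' is not sound as stated: even if $cy$ beats every other candidate, the minimum suffix of $\derive(X_i)y$ may begin \emph{inside} $y$ itself, so beating the set $C$ does not certify $c \in \kouho(X_i)$ without constraints on $y$ that you never pin down. Second, and more fundamentally, no test performed when $c$ is processed can \emph{confirm} membership at all: a candidate that survives every comparison available at that moment can still be invalidated by a longer candidate processed later, so a sweep that finalizes ``confirmed'' members as it goes cannot be correct.

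The paper sidesteps both problems by never confirming anything: it maintains $D_i$ as a pruned superset and only ever \emph{eliminates} candidates, with elimination final and survival tentative. The enabling idea, which is what your proposal is missing, is the invariant that $D_i$ is always a prefix chain --- every element is a prefix of the longest element $d$ (true initially by Lemma~\ref{lem:shorter_kouho_prefix}, and preserved by every update). Because of this, a single value $h = \lcp(s \cdot \derive(X_r), d)$ computed against the longest element alone resolves the fate of \emph{all} current candidates simultaneously: if $s \cdot \derive(X_r)$ is smaller at the mismatch, every element of $D_i$ longer than $h$ is provably not in $\kouho(X_i)$ and is deleted; if it is larger, the new string itself is discarded; and if $d$ is a prefix of $s \cdot \derive(X_r)$, the periodicity argument of Lemma~\ref{lem:kouho_twice_longer} (splitting on whether $|s \cdot \derive(X_r)|$ exceeds $2|d|$) decides whether $d$ dies. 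This is how the paper gets one lcp computation per element of $\kouho(X_\ell)$ without any shared-witness argument, and what remains at the end of the sweep is the sorted list for $\kouho(X_i)$. So your skeleton and cost analysis are right, but the prefix-chain invariant and the elimination-only case analysis are precisely the substance of the paper's proof, not a routine detail to be filled in.
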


\begin{proof}
Let $D_i$ be a sorted list of the suffixes of $X_i$
that are candidates of elements of $\kouho(X_i)$.
We initially set $D_i \leftarrow \kouho(X_r)$.

We process the elements of $\kouho(X_\ell)$ in increasing order of length.
Let $s$ be any string in $\kouho(X_\ell)$,
and $d$ the longest string in $D_i$.
\sinote*{added}{%
Since any string of $\kouho(X_r)$ is a prefix of $d$ by Lemma~\ref{lem:shorter_kouho_prefix},
in order to compute $\kouho(X_i)$
it suffices to lexicographically compare $s \cdot \derive(X_r)$ and $d$.
}%
Let $h = \lcp(s \cdot \derive(X_r), d)$).
See also Fig.~\ref{fig:kouho_0}.

\begin{figure}[tb]
  \centerline{\includegraphics[width=0.8\textwidth]{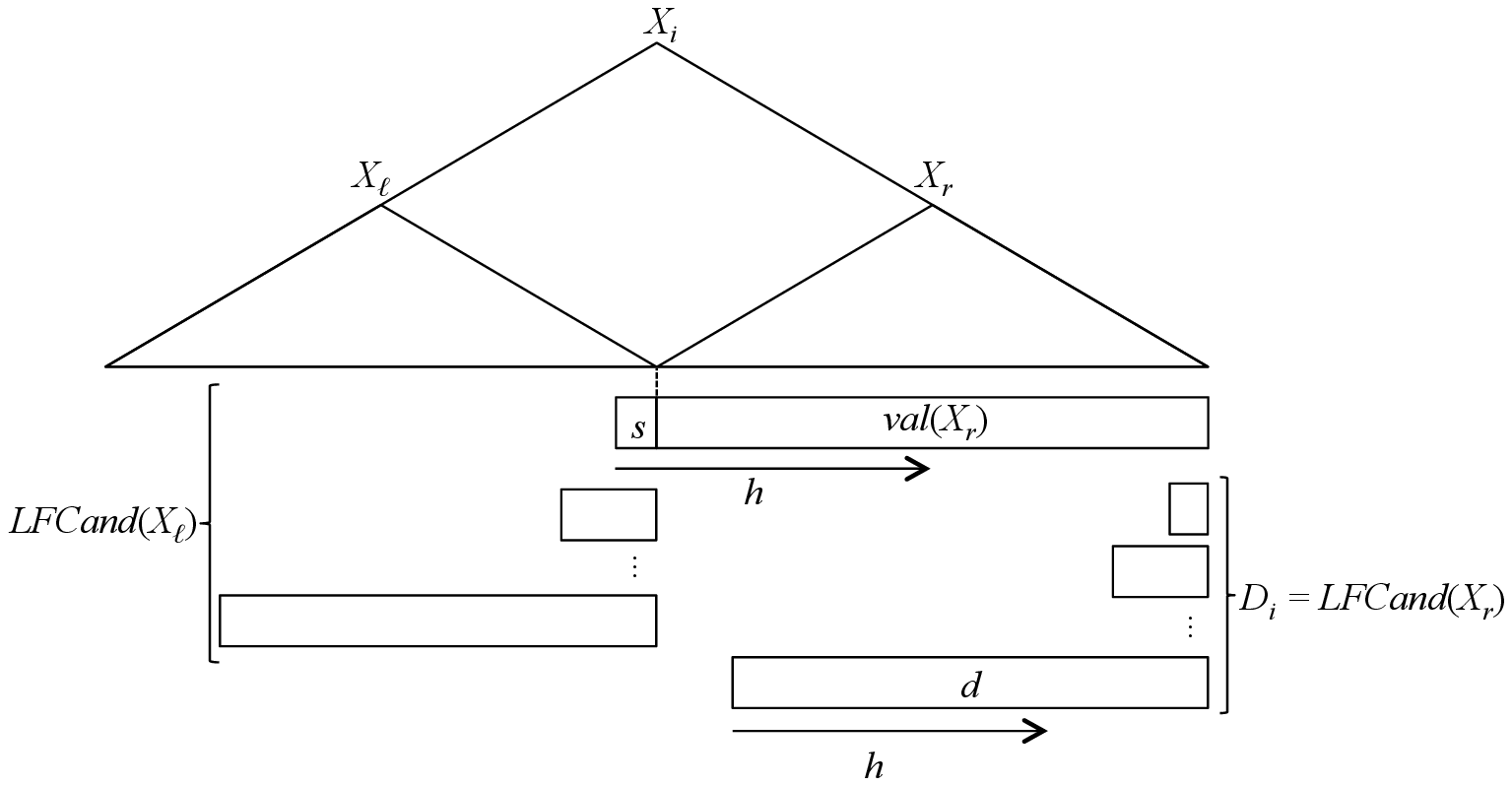}}
  \caption{Lemma~\ref{lem:kouho_O(n^3)}: Initially $D_i = \kouho(X_r)$ and $h = s \cdot \derive(X_\ell)$ with $s$ being the shortest string of $\kouho(X_\ell)$.
  }
  \label{fig:kouho_0}
\end{figure}

\begin{itemize}
\item If $(s \cdot \derive(X_r))[h+1] \prec d[h+1]$, then $s \cdot \derive(X_r) \prec d$.
Since any string in $D_i$ is a prefix of $d$ by Lemma~\ref{lem:shorter_kouho_prefix},
we observe that any element in $D_i$ that is longer than $h$ cannot be 
an element of $\kouho(X_i)$.
Hence we delete any element of $D_i$ that is longer than $h$ from $D_i$,
then add $s \cdot \derive(X_r)$ to $D_i$,
and update $d \leftarrow s \cdot \derive(X_r)$.
See also Fig.~\ref{fig:kouho_1}.
\begin{figure}[tbhp]
  \centerline{\includegraphics[width=0.7\textwidth]{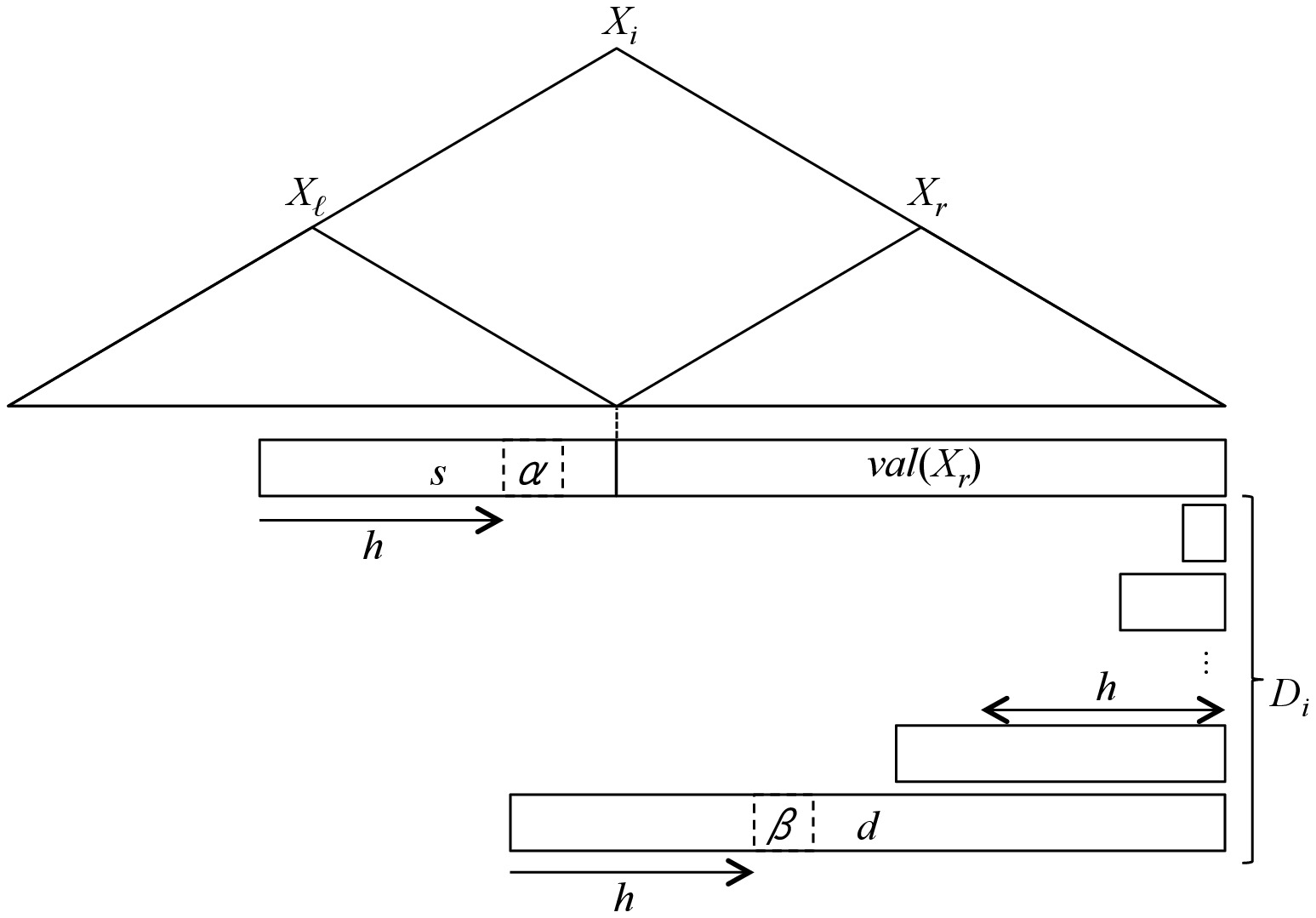}}
  \caption{Lemma~\ref{lem:kouho_O(n^3)}: Case where $(s \cdot \derive(X_r))[h+1] = \alpha \prec d[h+1] = \beta$. $d$ and any string in $D_i$ that is longer than $h$ are deleted from $D_i$. Then $s \cdot \derive(X_r)$ becomes the longest candidate in $D_i$.
  }
  \label{fig:kouho_1}
\end{figure}

\item If $(s \cdot \derive(X_r))[h+1] \succ d[h+1]$, then $s \cdot \derive(X_r) \succ d$.
Since $s \cdot \derive(X_r)$ cannot be an element of $\kouho(X_i)$, 
in this case neither $D_i$ nor $d$ is updated.
See also Fig.~\ref{fig:kouho_2}.
\begin{figure}[tbhp]
  \centerline{\includegraphics[width=0.7\textwidth]{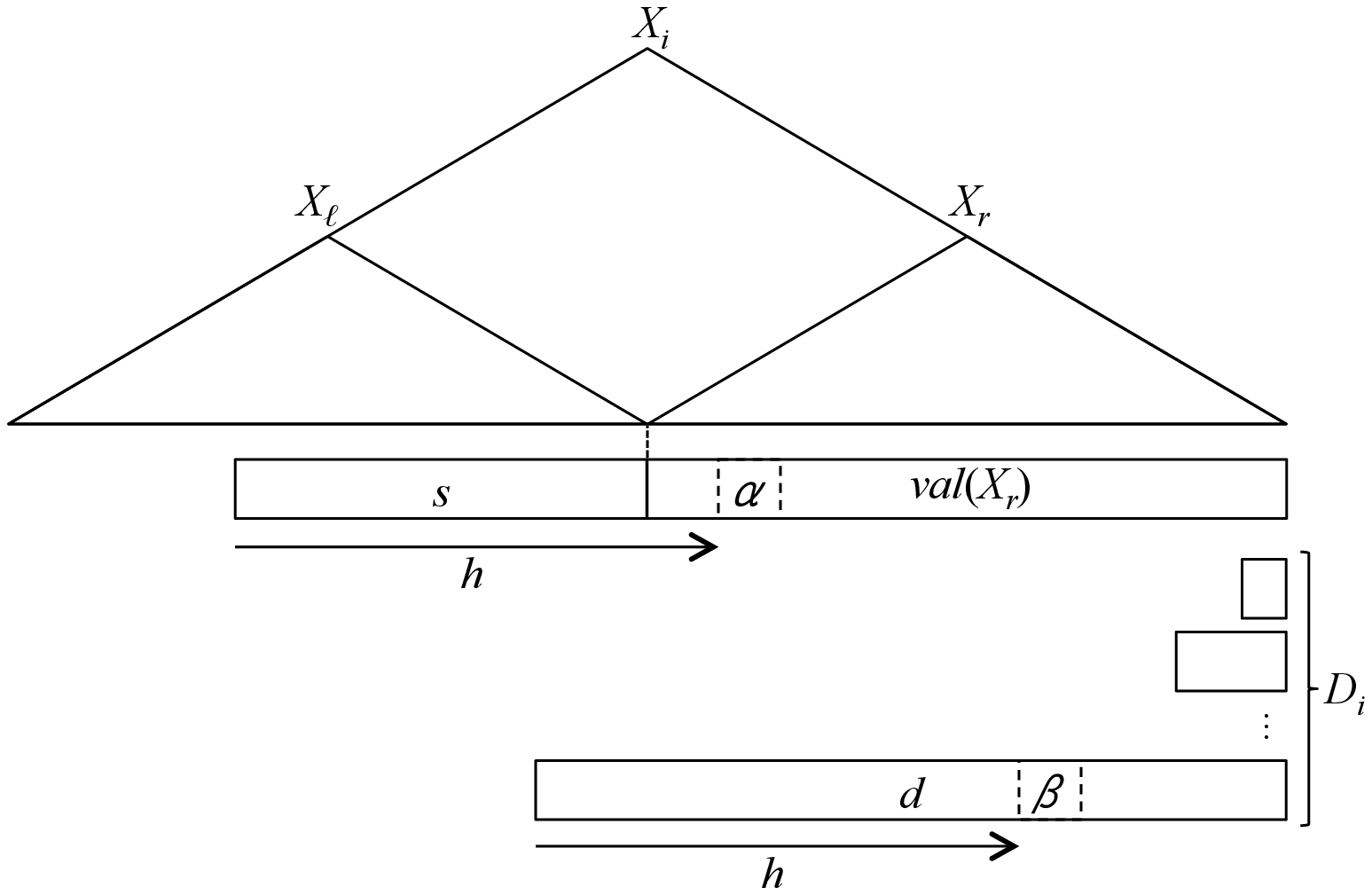}}
  \caption{Lemma~\ref{lem:kouho_O(n^3)}: Case where $(s \cdot \derive(X_r))[h+1] = \alpha \succ d[h+1] = \beta$. There are no updates on $D_i$.
  }
  \label{fig:kouho_2}
\end{figure}

\item If $h = |d|$, i.e., $d$ is a prefix of $s \cdot \derive(X_r)$,
then there are two sub-cases:
\begin{itemize}
\item If $|s \cdot \derive(X_r)| \leq 2|d|$,
$d$ has a period $q$ such that $s \cdot \derive(X_r) = u^k v$ and $d = u^{k-1} v$, 
where $u = d[1..q]$, $k \geq 1$ is an integer,
and $v$ is a proper prefix of $u$.
By similar arguments to Lemma~\ref{lem:kouho_twice_longer},
we observe that $d$ cannot be a member of $\kouho(X_i)$
while $s \cdot \derive(X_r)$ may be a member of $\kouho(X_i)$.
Thus we add $s \cdot \derive(X_r)$ to $D_i$, delete $d$ from $D_i$,
and update $d \leftarrow s \cdot \derive(X_r)$.
See also Fig.~\ref{fig:kouho_3_1}.
\begin{figure}[tbhp]
  \centerline{\includegraphics[width=0.7\textwidth]{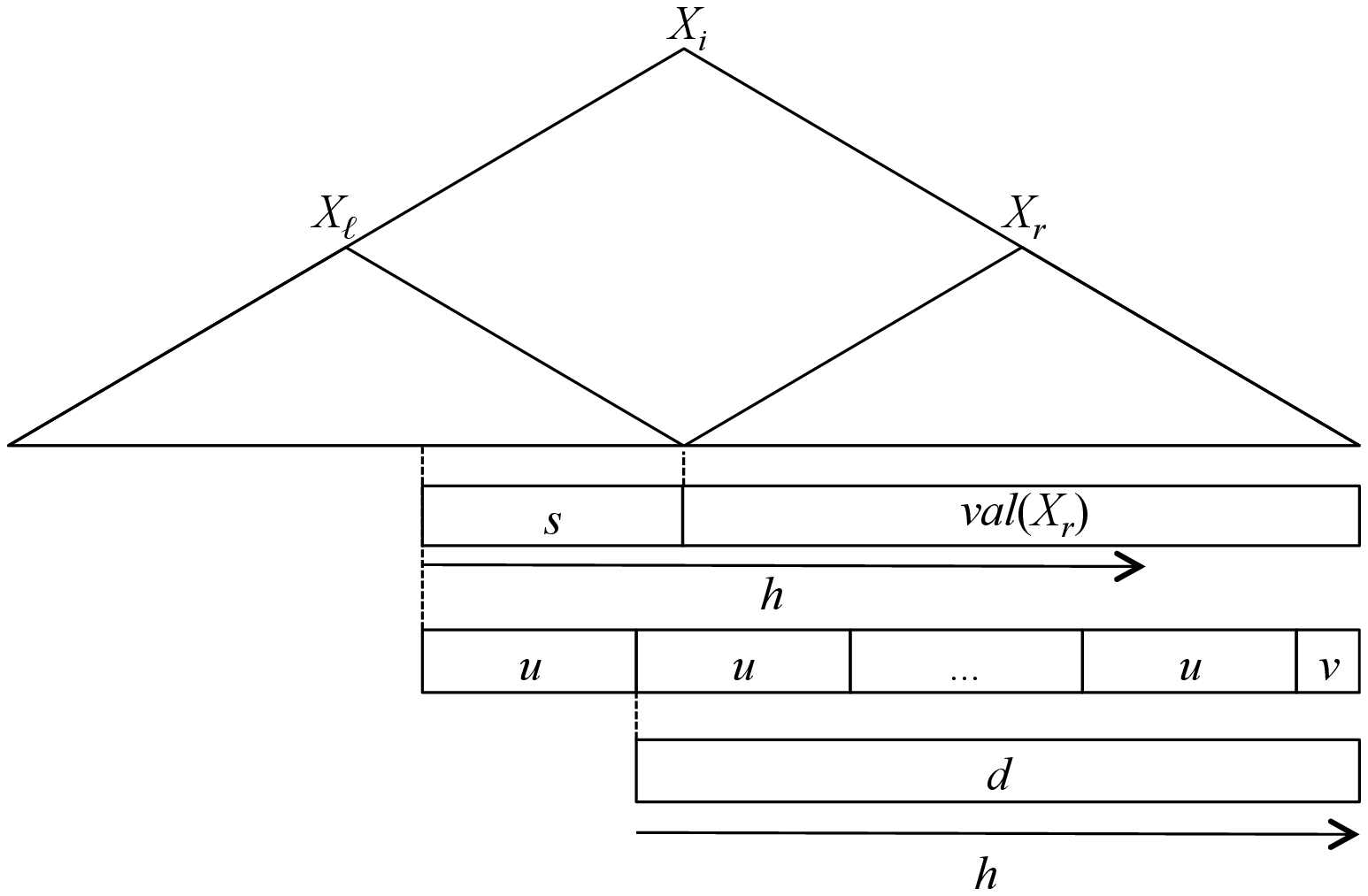}}
  \caption{Lemma~\ref{lem:kouho_O(n^3)}: Case where $h = |d|$ and $|s \cdot \derive(X_r)| \leq 2|d|$. Since $s \cdot \derive(X_r) = u^kv$ and $d = u^{k-1}v$, $d$ is deleted from $D_i$ and $s \cdot \derive(X_r)$ is added to $D_i$.
  }
  \label{fig:kouho_3_1}
\end{figure}

\item If $|s \cdot \derive(X_r)| > 2|d|$,
then both $d$ and $s \cdot \derive(X_r)$ may be a member of $\kouho(X_i)$.
Thus we add $s \cdot \derive(X_r)$ to $D_i$, and update $d \leftarrow s \cdot \derive(X_r)$.
See also Fig.~\ref{fig:kouho_3_2}.
\begin{figure}[tbhp]
  \centerline{\includegraphics[width=0.7\textwidth]{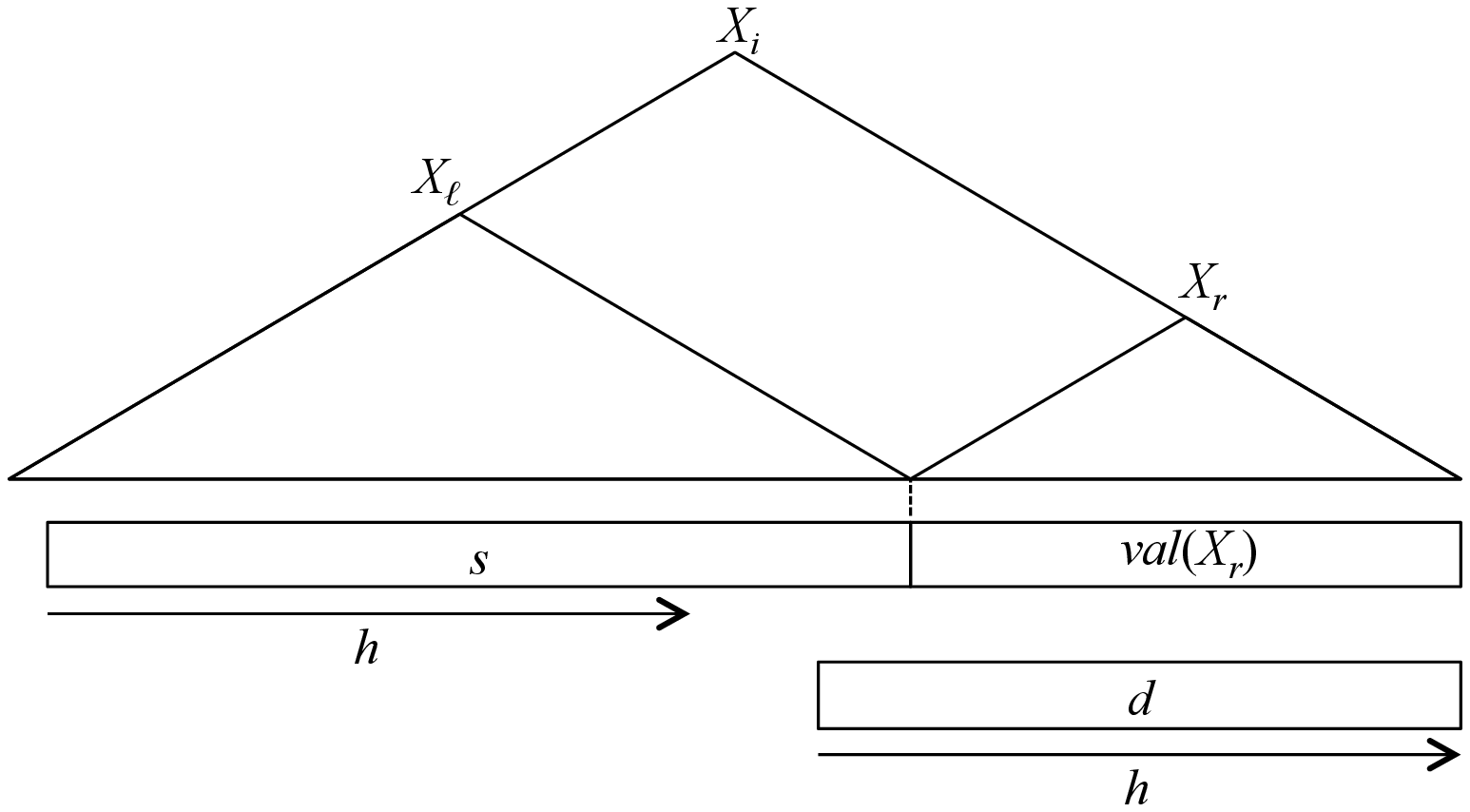}}
  \caption{Lemma~\ref{lem:kouho_O(n^3)}: Case where $h = |d|$ and $|s \cdot \derive(X_r)| > 2|d|$. We add $s \cdot \derive(X_r)$ to $D_i$, and $s \cdot \derive(X_r)$ becomes the longest member of $D_i$.
  }
  \label{fig:kouho_3_2}
\end{figure}

\end{itemize}
\end{itemize}
We represent the strings in $\kouho(X_\ell)$,
$\kouho(X_r)$, $\kouho(X_i)$, and $D_i$ by their lengths.
Given sorted lists of $\kouho(X_\ell)$ and $\kouho(X_r)$,
the above algorithm computes a sorted list for $D_i$,
and it follows from Lemma~\ref{lem:kouho_twice_longer}
that the number of elements in $D_i$ is always $O(\log N)$.
Thus all the above operations on $D_i$
can be conducted in $O(\log N)$ time in each step.

We now show how to efficiently compute $h = \lcp(s \cdot \derive(X_r), d)$,
for any $s \in \kouho(X_\ell)$.
Let $z$ be the longest string in $\kouho(X_\ell)$,
and consider to process any string $s \in \kouho(X_\ell)$.
Since $s$ is a prefix of $z$ by Lemma~\ref{lem:shorter_kouho_prefix},
we can compute $\lcp(s \cdot \derive(X_r), d)$ as follows:
\[
 \lcp(s \cdot \derive(X_r), d) = 
 \begin{cases}
  \lcp(z, d) & \mbox{if } \lcp(z, d) < |s|, \\
  |s| + \lcp(X_r, d[|s|+1..|d|]) & \mbox{if } \lcp(z, d) \geq |s|.
 \end{cases}
\]
To compute the above lcp values using the $\FirstMismatch$ function,
for each variable $X_i$ of $\mathcal{S}$
we create a new production $X_{n+i} = X_i X_i$,
and hence the number of variables increases to $2n$.
In addition, we construct a new SLP of size $O(n)$
that derives $z$ in $O(n)$ time using Lemma~\ref{lem:substring_SLP}.
Let $Z$ be the variable such that $\derive(Z) = z$.
It holds that
\begin{eqnarray*}
\lcp(z, d) & = & \min\{\lcp(Z, X_{n+i}[|X_i|-|d|+1..|X_{n+i}|]), |d|\} \mbox{ and} \\
\lcp(X_r, d[|s|+1..|d|]) & = & \min\{\lcp(X_r, X_{n+r}[|X_r|-|d|+|s|+1..|X_{n+r}|]), |d|-|s|\}.
\end{eqnarray*}
See also Fig.~\ref{fig:kouho_lcp_zd} and Fig.~\ref{fig:kouho_lcp_xr}.
\begin{figure}[tb]
  \centerline{\includegraphics[width=0.7\textwidth]{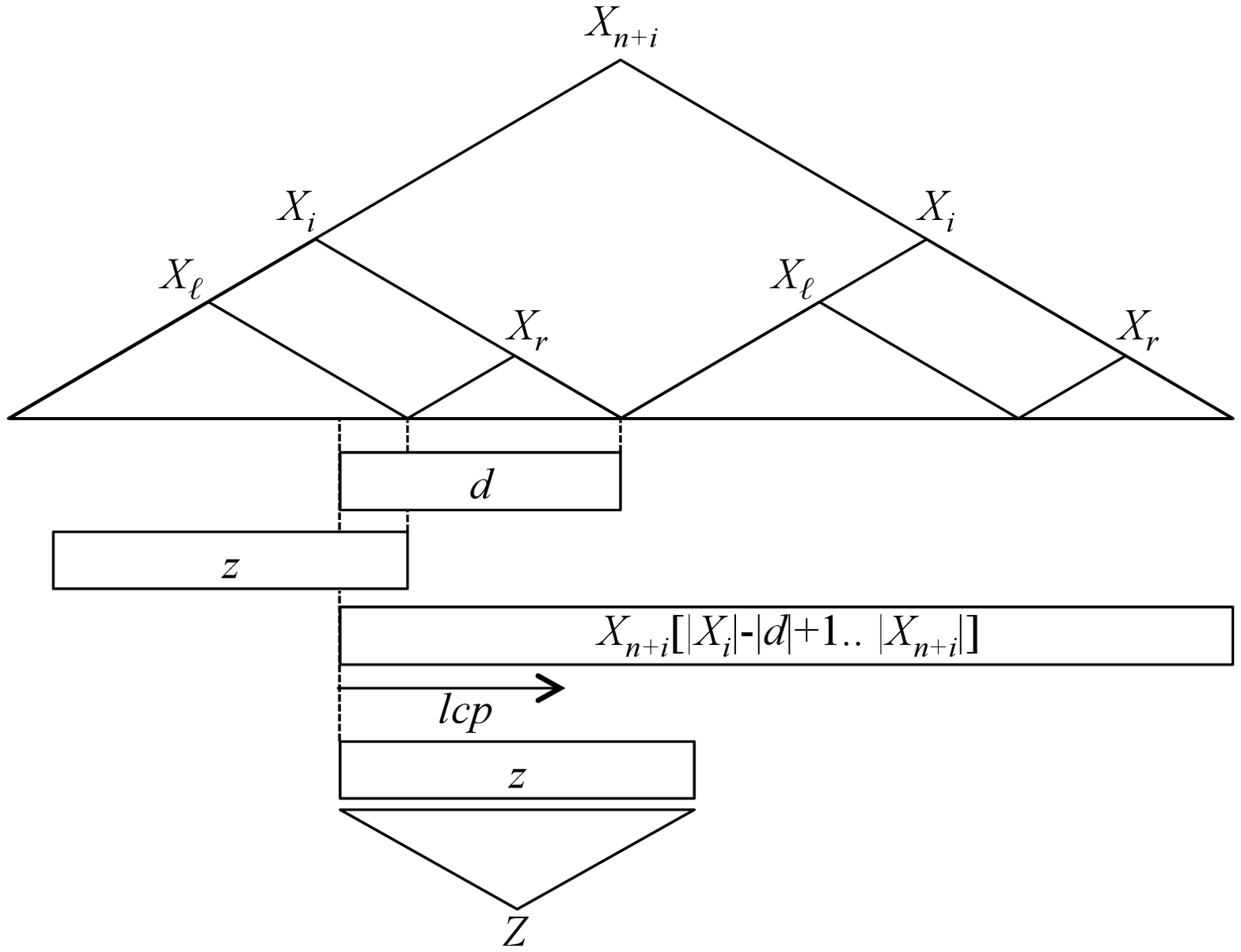}}
  \caption{Lemma~\ref{lem:kouho_O(n^3)}: $\lcp(z, d) = \min\{\lcp(Z, X_{n+i}[|X_i|-|d|+1..|X_{n+i}|]), |d|\}$.
  }
  \label{fig:kouho_lcp_zd}
\end{figure}

\begin{figure}[tb]
  \centerline{\includegraphics[width=0.7\textwidth]{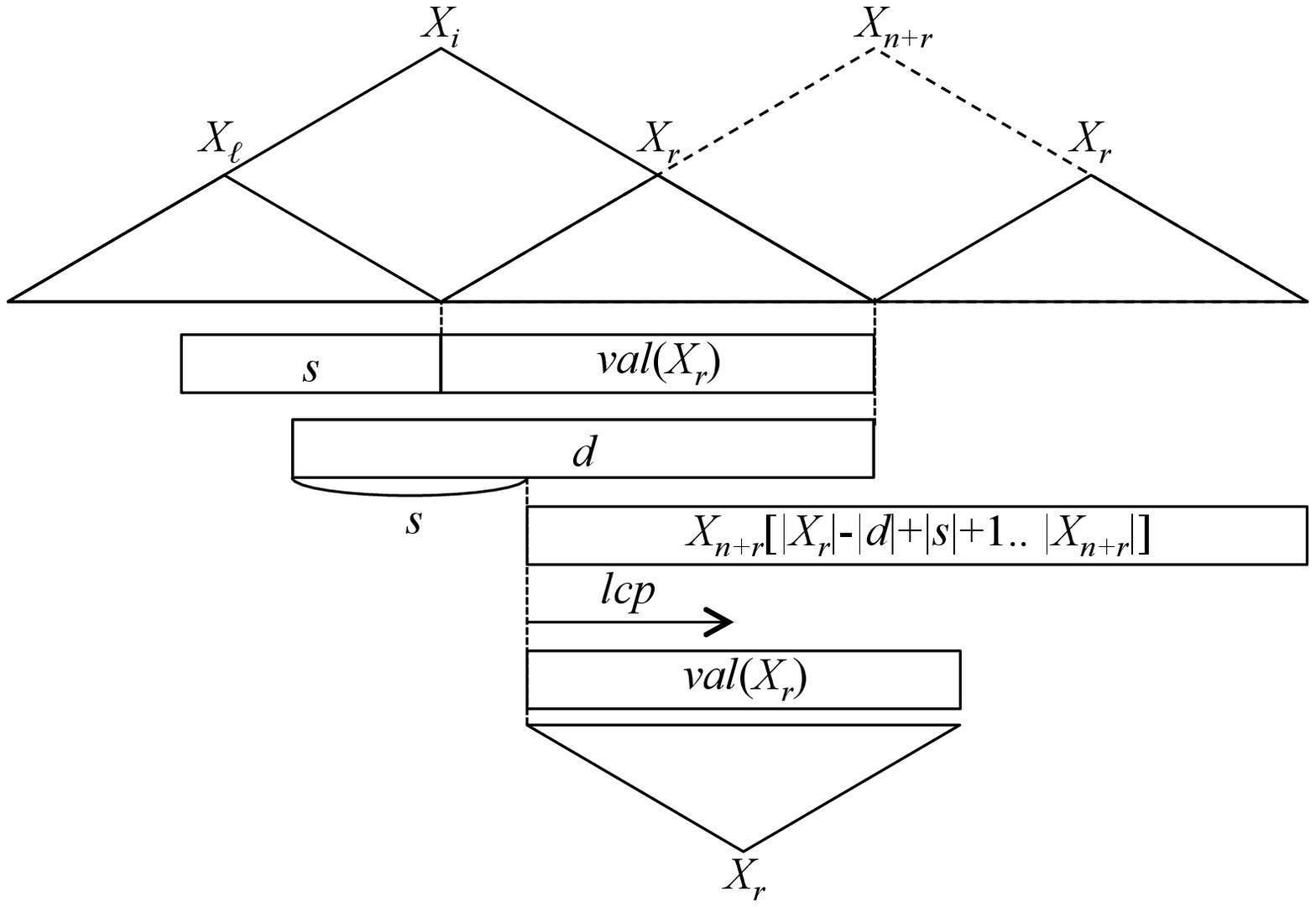}}
  \caption{Lemma~\ref{lem:kouho_O(n^3)}: $\lcp(X_r, d[|s|+1..|d|]) = \min\{\lcp(X_r, X_{n+r}[|X_r|-|d|+|s|+1..|X_{n+r}|]), |d|-|s|\}$.
  }
  \label{fig:kouho_lcp_xr}
\end{figure}
By using Lemma~\ref{lem:lcp_slp}, 
we preprocess, in $O(n^3)$ time and $O(n^2)$ space, the SLP consisting of these variables
so that the query $\FirstMismatch(X_i, X_j, k)$ 
for answering $\lcp(X_i[k..|X_i|], X_j)$ 
is supported in $O(n^2)$ time.
Therefore $\lcp(s \cdot \derive(X_r), d)$ can be computed 
in $O(n^2)$ time for each $s \in \kouho(X_\ell)$.
Since there exist $O(\log N)$ elements in $\kouho(X_\ell)$,
we can compute $\kouho(X_i)$ in $O(n^3 + n^2 \log N) = O(n^3)$ time.
The total space complexity is $O(n^2)$.
\qed
\end{proof}
Since there are $n$ productions in a given SLP,
using Lemma~\ref{lem:kouho_O(n^3)} we can compute $\kouho(X_n)$
for the last variable $X_n$ in a total of $O(n^4)$ time.
The main result of this paper follows.
\begin{theorem}
Given an SLP $\mathcal{S}$ of size $n$ representing a string $w$,
we can compute $\LF(w)$
in $O(n^4 + m n^3 h)$ time and $O(n^2)$ space,
where $m$ is the number of factors in $\LF(w)$ and $h$ is the height of the derivation tree of $\mathcal{S}$.
\end{theorem}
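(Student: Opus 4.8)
The plan is to recover the Lyndon factors from the rightmost $\ell_m^{p_m}$ to the leftmost $\ell_1^{p_1}$ by repeatedly identifying the last factor of the current prefix and peeling it off. The engine is the reduction already established before the statement: by Lemma~\ref{lem:smallest_suffix} together with Lemma~\ref{lem:LF_shortest_kouho}, the last Lyndon factor of any string equals $\ell^p$, the shortest element of its $\kouho$ set.

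First I would precompute $\kouho(X_i)$ for every variable $X_i$ of $\mathcal{S}$ by processing the productions bottom-up and applying Lemma~\ref{lem:kouho_O(n^3)} to each; this is the $O(n^4)$ term, and since each $\kouho(X_i)$ contains only $O(\log N)=O(n)$ strings (Lemma~\ref{lem:kouho_twice_longer}) stored as lengths, the tables occupy $O(n^2)$ space. Then I would run a loop that maintains a prefix length $j$, initialized to $N=|w|$. In each iteration I extract an SLP for $w[1..j]$ via Lemma~\ref{lem:substring_SLP}. The crucial point is that this SLP shares all of its ``deep'' variables with $\mathcal{S}$ and introduces only $O(h)$ genuinely new variables along the single right-boundary spine; because $\kouho$ depends only on the derived string, the shared variables retain their precomputed values and I only need to compute $\kouho$ for the $O(h)$ new variables, again bottom-up via Lemma~\ref{lem:kouho_O(n^3)}, at cost $O(hn^3)$. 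The shortest element of the root's $\kouho$ is the current last factor $\ell^p$, of some length $L$. To split it into $(|\ell|,p)$ I extract $w[j-L+1..j]$ (Lemma~\ref{lem:substring_SLP}) and compute its shortest period $q$ (Lemma~\ref{lem:SLP_period}); since a Lyndon word is primitive, $q=|\ell|$ and $p=L/q$. I record $(|\ell|,p)$, set $j\leftarrow j-L$, and repeat until $j=0$, finally reversing the recorded list to obtain $\LF(w)$.

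Correctness follows from the uniqueness of the Lyndon factorization: since $\ell_1\succ\cdots\succ\ell_m$, after removing the last factor the remaining prefix $\ell_1^{p_1}\cdots\ell_{m-1}^{p_{m-1}}$ is itself a concatenation of strictly decreasing Lyndon words and hence is exactly $\LF(w[1..j-L])$, so each iteration legitimately exposes the next factor from the right. For the running time, each iteration costs $O(hn^3)$ for the $\kouho$ update plus $O(n^3\log N)$ for the period; using that the binary derivation tree of height $h$ has at most $2^h$ leaves we get $\log N\le h$, so an iteration is $O(n^3 h)$ and the $m$ iterations contribute $O(mn^3 h)$, giving the claimed $O(n^4+mn^3h)$. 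The $O(n^2)$ space bound is preserved because the $\kouho$ tables, the $\FirstMismatch$ preprocessing of Lemma~\ref{lem:lcp_slp}, and the period computation of Lemma~\ref{lem:SLP_period} each use $O(n^2)$ space.

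The hard part will be justifying that a prefix extraction forces recomputation of only $O(h)$ new $\kouho$ sets, since this is exactly what separates the intended $O(mn^3h)$ bound from the naive $O(mn^4)$ obtained by rebuilding all $\kouho$ tables from scratch in every iteration. This rests on two facts I would make explicit: that the substring construction of Lemma~\ref{lem:substring_SLP} alters only one root-to-leaf boundary path of length $O(h)$ when the left endpoint is fixed at $1$, and that $\kouho(X_i)=\kouho(\derive(X_i))$ is an invariant of the derived string, so re-using the original variables across all iterations is sound.
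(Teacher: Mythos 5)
Your proposal follows the paper's proof essentially step for step: precompute $\kouho$ for all variables in $O(n^4)$ time, peel off Lyndon factors from right to left, reuse the precomputed tables so that each prefix SLP from Lemma~\ref{lem:substring_SLP} requires only $O(h)$ new $\kouho$ computations, identify the last factor as the shortest element of the root's $\kouho$ (Lemmas~\ref{lem:smallest_suffix} and~\ref{lem:LF_shortest_kouho}), split it via the shortest-period computation of Lemma~\ref{lem:SLP_period}, and absorb the $O(n^3\log N)$ period cost using $\log N = O(h)$. One small wording correction: justifying $q = |\ell_j|$ requires that Lyndon words are \emph{border-free} (the property the paper invokes), not merely primitive --- primitivity alone is insufficient when $p_j = 1$, since a primitive word such as $\mathtt{aba}$ can have a period shorter than its length.
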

\begin{proof}
Let $\LF(w) = \ell_1^{p_1} \cdots \ell_m^{p_m}$.
First, using Lemma~\ref{lem:kouho_O(n^3)} we compute $\kouho$ for all variables in $\mathcal{S}$ in $O(n^4)$ time.
Next we will compute the Lyndon factors from right to left.
Suppose that we have already computed $\ell_{j+1}^{p_{j+1}} \cdots \ell_m^{p_m}$,
and we are computing the $j$th Lyndon factor $\ell_j^{p_j}$.
Using Lemma~\ref{lem:substring_SLP},
we construct in $O(n)$ time 
a new SLP of size $O(n)$ describing $w[1..|w|-\sum_{k=j+1}^{m} p_k|\ell_k|]$,
which is the prefix of $w$ obtained by removing 
the suffix $\ell_{j+1}^{p_{j+1}} \cdots \ell_m^{p_m}$ from $w$.
Here we note that the new SLP actually has $O(h)$ new variables 
since $w[1..|w|-\sum_{k=j+1}^{m} p_k|\ell_k|]$ can be represented by a sequence of $O(h)$ variables in $\mathcal{S}$.
Let $Y$ be the last variable of the new SLP.
Since $\kouho$ for all variables in $\mathcal{S}$ have already been computed,
it is enough to compute $\kouho$ for $O(h)$ new variables.
Hence using Lemma~\ref{lem:kouho_O(n^3)},
we compute a sorted list of $\kouho(Y) = \kouho(w[1..|w|-\sum_{k=j+1}^{m} p_k|\ell_k|])$ in a total of $O(n^3 h)$ time.
It follows from Lemma~\ref{lem:LF_shortest_kouho} that
the shortest element of $\kouho(Y)$ is $\ell_j^{p_j}$, the $j$th Lyndon factor of $w$.
Note that each string in $\kouho(Y)$ is represented by its length,
and so far we only know the total length $p_{j}|\ell_j|$ of the $j$th Lyndon factor.
Since $\ell_j$ is border free,
$|\ell_j|$ is the shortest period of $\ell_j^{p_j}$.
We construct a new SLP of size $O(n)$ describing $\ell_j^{p_j}$,
and compute $|\ell_j|$ in $O(n^3 \log N)$ time using Lemma~\ref{lem:SLP_period}.
We repeat the above procedure $m$ times,
and hence $\LF(w)$ can be computed in a total of $O(n^4 + m(n^3 h + n^3 \log N)) = O(n^4 + m n^3 h)$ time.
To compute each Lyndon factor of $\LF(w)$,
we need $O(n^2)$ space for Lemma~\ref{lem:SLP_period} and Lemma~\ref{lem:kouho_O(n^3)}.
Since $\kouho(X_i)$ for each variable $X_i$ requires $O(\log N)$ space,
the total space complexity is $O(n^2 + n \log N) = O(n^2)$.
\qed
\end{proof}

\section{Conclusions and open problem}
Lyndon words and Lyndon factorization are important concepts of 
combinatorics on words, with various applications.
Given a string in terms of an SLP of size $n$,
we showed how
to compute the Lyndon factorization of the string in $O(n^4 + mn^3h)$ time
using $O(n^2)$ space,
where $m$ is the size of the Lyndon factorization and $h$ is the height of the SLP.
Since the decompressed string length $N$ can be exponential w.r.t. $n, m$ and $h$,
our algorithm can be useful for highly compressive strings.

An interesting open problem is to compute the Lyndon factorization
from a given LZ78 encoding~\cite{LZ78}.
Each LZ78 factor is a concatenation of the longest previous factor
and a single character.
Hence, it can be seen as a special class of SLPs,
and this property would lead us to a much simpler and/or more efficient solution to the problem.
Noting the number $s$ of the LZ78 factors is $\Omega(\sqrt{N})$,
a question is whether we can solve this problem in $o(s^2) + O(m)$ time.

\bibliographystyle{splncs03}
\bibliography{ref}

\end{document}